\newtheorem{corollary}{\bf Corollary}
\newtheorem{theorem}{\bf Theorem}
\newtheorem{definition}{\bf Definition}
\newcommand*{\myfigfactor}{0.5}
\newcommand*{\mysubfigfactor}{0.465}
\DeclareMathOperator*{\argmax}{arg\,max}
\DeclareMathOperator*{\argmin}{arg\,min}
\DeclareMathOperator*{\sumsum}{\sum\sum}
\newcommand{\vect}{\boldsymbol}
\newcommand{\vectab}[2]{#1_1,\ldots,#1_{#2}}
\newcommand{\vectabc}[3]{#1_{1}#3,\ldots,#1_{#2}#3}
\newcommand{\Seta}[1]{1,\ldots,#1}
\newcommand{\Set}[1]{\mathcal{#1}}
\newcommand{\vectx}{\vect{x}}
\newcommand{\vecty}{\vect{y}}
\newcommand{\texth}[1]{\!^\text{#1}}
\newcommand{\STATExx}{\STATEx\hspace{.5\algorithmicindent}}
\newcounter{loopcntr}
\newcommand{\rpt}[2][1]{%
	\forloop{loopcntr}{0}{\value{loopcntr}<#1}{#2}%
}
\newcommand{\subgroup}[1]%
{\rlap{\smash{%
			\newcount\cnt%
			\cnt \numexpr#1\relax%
			\advance\cnt -1\relax%
			$\tabcolsep=.1em\begin{tabular}[t]{|l}\multicolumn{1}{l}{}\\%
			\rpt[\cnt]{\\}
			\\\hline\end{tabular}$%
		}}}
		\newcommand{\tikzmark}[1]{\tikz[overlay,remember picture,baseline=-1ex] \node (#1) {};}
		\newcommand{\allclust}{\vect{\overline{\Set{C}}}}
		\newcounter{myRefCount}
\begin{document}

\title{\huge Dynamic Clustering and ON/OFF Strategies for Wireless Small Cell Networks
}

\author{
	\IEEEauthorblockN{Sumudu Samarakoon,~\IEEEmembership{Student Member, IEEE}, Mehdi Bennis,~\IEEEmembership{Senior Member, IEEE}, Walid Saad,~\IEEEmembership{Senior Member, IEEE} and Matti Latva-aho,~\IEEEmembership{Senior Member, IEEE}}
	\thanks{This research was supported by the Finnish Funding Agency for Technology and Innovation (TEKES), Nokia, Anite, and Huawei Technologies, the SHARING project under the Finland grant 128010, and the U.S. National Science Foundation (NSF) under Grants CNS-1460333, CNS-1460316, and CNS-1513697.
		
		Sumudu Samarakoon, Mehdi Bennis, and Matti Latva-aho are with Department of Communications Engineering, University of Oulu, Finland (email: \{sumudu,bennis,matti.latva-aho\}@ee.oulu.fi).
		
		Walid Saad was a corresponding author, is with Wireless@VT, Bradley Department of Electrical and Computer Engineering, Virginia Tech, Blacksburg, VA´.
		He was also International Scholar at Department of Computer Engineering, Kyung Hee University, Seoul, South Korea (email: walids@vt.edu).
		
		The authors would like to thank Dr. Kimmo Hiltunen, Ericsson, Finland, for his valuable comments which helped to improve the quality of this work.}
}


\maketitle
\nopagebreak[4]

\begin{abstract}
	
	In this paper, a novel cluster-based approach for maximizing the energy efficiency of wireless small cell networks is proposed.
	A dynamic mechanism is proposed to group locally-coupled small cell base stations (SBSs) into clusters based on location and traffic load.
	Within each formed cluster, SBSs coordinate their transmission parameters to minimize a cost function which captures the tradeoffs between energy efficiency and flow level performance, while satisfying their users' quality-of-service requirements.
	Due to the lack of inter-cluster communications, clusters compete with one another in order to improve the overall network's energy efficiency.
	This inter-cluster competition is formulated as a noncooperative game between clusters that seek to minimize their respective cost functions.
	To solve this game, a distributed learning algorithm is proposed using which clusters autonomously choose their optimal transmission strategies based on local information.
	It is shown that the proposed algorithm converges to a stationary mixed-strategy distribution which constitutes an epsilon-coarse correlated equilibrium for the studied game.
	Simulation results show that the proposed approach yields significant performance gains reaching up to $36\%$ of reduced energy expenditures and up to $41\%$ of reduced fractional transfer time compared to conventional approaches.

\end{abstract}


\begin{keywords}
	Small cell networks; energy efficiency; learning; game theory; 5G.
\end{keywords}

\section{Introduction}\label{sec:introduction}

In the past decade, wireless services have evolved from traditional voice and text messaging to advanced applications such as video streaming, multimedia file sharing, and social networking~\cite{online:comscore13}.
Such bandwidth-intensive applications increase the load of existing wireless cellular systems and potentially lead to an increased energy consumption~\cite{pap:arshad12,pap:brevis11,jnl:kyuho11}.
The deployment of low-cost and high-capacity small cells over existing cellular networks has been introduced as a promising solution to offload the macro cellular traffic to small cell networks~\cite{onln:yan14}.
However, a systematic deployment of small cells may cause severe inter-cell interference and can increase the network's overall energy consumption.
Thus, it is important to analyze and address the inter-cell interference in wireless small cell networks~\cite{jnl:rengarajan11}.
In this respect, the development of energy-efficient resource management mechanisms for wireless small cell networks has become a major research challenge in recent years~\cite{pap:arshad12,pap:brevis11,jnl:kyuho11,pap:zhou11,pap:bhuaumik10,pap:celebi13,pap:soh13}.

Existing literature has studied energy efficiency enhancements of small cells under various scenarios.
In \cite{pap:arshad12}, an optimal deployment strategy is proposed for a two-tier network based on power minimization subject to a target spectral efficiency.
The work in \cite{pap:brevis11} introduces a stochastic programing approach to minimize energy consumption by optimizing small cell base station (SBS) locations.
In \cite{jnl:kyuho11}, a greedy algorithm for turning base stations (BSs) on or off is proposed to improve the tradeoff between energy efficiency and traffic delay.
A probabilistic sleep-wake mechanism is presented in~\cite{pap:zhou11} to optimize energy efficiency of relays in conventional cellular networks.
In \cite{pap:bhuaumik10}, a new approach to find the optimal cell size and sleep/wake state is studied focusing on energy consumption.
A random ON/OFF strategy for data networks is proposed in \cite{pap:celebi13} for energy reduction by analyzing the distributions of delay tolerant and intolerant users.
Although these interesting works enable notable energy reductions in small cell networks, they do not
provide thorough analytical studies on the convergence and the optimality of the solution obtained from the proposed algorithms.
In addition, some of practical problems of small cells such as the selfish behavior of BSs, the inefficient BS switching between ON and OFF, the underutilized radio spectrum as well as the potential undesirable outages at BSs are not taken into account~\cite{pap:bhuaumik10,pap:hongseok13,pap:celebi13}.

One promising solution to overcome these challenges is to enable the BSs to coordinate their transmission.
In this regard, different types of BS clustering methods~\cite{book:rokach05,jnl:luxburg07,book:jacob09,pap:ogston03,pap:backstrom11} and cluster based coordination mechanisms~\cite{jnl:amr14,pap:hoisseini12,pap:lee13} have been recently proposed.
In \cite{jnl:amr14}, a clustering-based resource allocation and interference management scheme is proposed for femtocells based on exhaustive search.
The authors in \cite{pap:hoisseini12} introduce an inter-cluster interference coordination technique to improve the sum rate of a femtocell network.
In \cite{pap:lee13}, a threshold based BS sleep algorithm is proposed to improve the energy efficiency of cellular networks by using a Delaunay triangulation graph.
Although these studies take advantage of clustering, they are based on heuristics with no convergence proofs and in which  BS clustering is static and does not capture network dynamics such as variation of BS spatio-temporal loads or user arrivals and departures.

The main contribution of this paper is to propose a new cluster-based ON/OFF strategy that allows SBSs to dynamically switch ON and OFF, depending on the current, slowly varying traffic load, user requirements, and network topology.
In small cell networks, performing dynamic approaches for switching BSs ON and OFF may require the knowledge of the entire network to operate effectively which incurs significant overhead.
Therefore, coordination mechanisms with minimum overhead are needed to group BSs into clusters within which BSs can smartly and locally coordinate their transmissions.
Unlike previous studies~\cite{jnl:amr14,pap:hoisseini12,pap:lee13}, we investigate not only location-based clustering methods, but we also consider the effects of BS capabilities to dynamically handle traffic, and further compare the performance of centralized and decentralized clustering solutions.
Moreover, we propose a novel inter- and intra-cluster implicit coordination mechanism for dynamically switching ON/OFF BSs while striking a balance between throughput and energy consumption.
In the proposed approach, each cluster of BSs aims at minimizing a cost function which captures the individual energy consumption and the flow level performance in terms of the fractional time required to transfer their traffic load.
Since clusters compete among each other, we cast the problem as a noncooperative game among clusters of BSs.
To solve this game, we propose a distributed algorithm using notions from regret learning, in which clusters implicitly coordinate their transmissions without information exchange~\cite{pap:bennis12,jnl:qian12}.
Furthermore, we prove that the proposed algorithm converges to a stationary distribution which results in an $\epsilon$-coarse correlated equilibrium in the noncooperative game, and the optimality of the solution is analyzed.
Finally, simulation results show that the proposed approach improves the energy efficiency and reduces the overall network traffic significantly as compared to conventional approaches.

The rest of the paper is organized as follows.
The system model is presented in Section~\ref{sec:system_model} and the problem formulation and cluster-based coordination are provided in Section~\ref{sec:problem_form}.
The proposed game theoretical approach for the decentralized BS ON/OFF mechanism and the convergence and optimality analysis of the solution are discussed in Section~\ref{sec:game_solution}.
Section~\ref{sec:results} provides simulation results and, finally, conclusions are drawn in Section~\ref{sec:conclusions}.

\subsubsection*{Notation}\label{sec:notation}

The regular symbols represent the scalars while the boldface symbols are used for vectors.
$\|\vectx\|$ denotes the Euclidean norm of the vector $\vectx$.
The sets are denoted by upper case calligraphic symbols.
$|\Set{X}|$ and $\Delta(\Set{X})$ represent the cardinality and the set of all probability distributions of the finite set $\Set{X}$, respectively.
The function $\mathbbm{1}_{\Set{X}}(x)$ denotes the indicator function defined as $\mathbbm{1}_{\Set{X}}(x)=1$ if $x\in\Set{X}$ and $\mathbbm{1}_{\Set{X}}(x)=0$ if $x\notin\Set{X}$,
and $x^+=\max(x,0)$ defines the selection of the largest non-negative component of $(x,0)$.
Let $\mathbb{E}_{\,\vect\pi}\big[f(\vectx)\big]$ be the expected value of the function $f(\cdot)$ of random vector $\vectx$ over a probability distribution $\vect\pi$.
Furthermore, $\mathbb{R}$, $\mathbb{Z}_+$, $\mathbf{1}$ and $\mathbf{0}$ represent the set of real numbers, the set of non-negative integers, vector of all ones, and vector of all zeros, respectively.


\section{System Model and Problem Formulation}\label{sec:system_model}

\subsection{System Model}\label{sec:network_model}

Consider the downlink of a wireless small cell network whose BSs $\Set{B}=\{\Seta{B}\}$ which includes macro BSs (MBSs) and SBSs.
We assume that BSs are uniformly distributed over a two-dimensional network layout and we let $\vectx$ be the vector of two-dimensional coordinates with respect to the origin.
Let $\Set{L}_b$ be the coverage area of BS $b$ such that any given user equipment (UE) at a given location $\vectx$ is served by BS $b$ if $\vectx\in\Set{L}_b$.

We assume that UEs connected to BS $b$ are heterogeneous in nature such that each UE has a different QoS requirement based on its individual packet arrival rate.
In this respect, let $\eta_b(\vectx)$ and $R_b(\vectx)$ be the traffic influx rate and the data rate of any UE at location $\vectx\in\Set{L}_b$ which follows an M/M/1 queue.
The fraction of time BS $b$ needs to serve the traffic ${\eta_b(\vectx)}$ from BS $b$ to location $\vectx$ is defined as,
\begin{equation}\label{eqn:load_density}
\varrho_b(\vectx)=\textstyle\frac{\eta_b(\vectx)}{R_b(\vectx)}.
\end{equation}
Consequently, the \emph{fractional transfer time} of BS $b$ is given by~\cite{jnl:hongseok12}:
\begin{equation}\label{eqn:load_bs}
\rho_b = \int_{\vectx\in\Set{L}_b} \varrho_b(\vectx) d\vectx. 
\end{equation}
Here, the fractional transfer time $\rho_b$ represents the fractional time required for a BS to deliver its requested traffic.
Moreover, the average number of flows at BS $b$ is given by $\frac{\rho_b}{1-\rho_b}$ and it is proportional to the expected delay at BS $b$~\cite[pp. 169]{book:gallager92},\cite{jnl:hongseok12}.
Thus, the parameter $\rho_b$ which indicates the flow level behavior in terms of fractional time can be referred to as \emph{time load} or simply \emph{load}, hereinafter~\cite{jnl:kyuho11,pap:hongseok13,jnl:hongseok12}.
For any BS, a successful transmission implies that the BS delivers the traffic to its respective UEs, i.e. $\rho_b\in(0,1)$ for any $b\in\Set{B}$.
Therefore, the effective transmission power $P^{\texth{Work}}_b$ of BS $b$, when it uses a transmission power of $P_b$ is $P^{\texth{Work}}_b = \rho_b P_b$.
A switched-ON BS $b$ consumes $P^{\texth{Base}}_b$ power to operate its radio frequency components and baseband unit in addition to the effective transmission power $P^{\texth{Work}}_b$~\cite{pap:georgios12B}.
From an energy efficiency perspective, some BSs might have an incentive to switch OFF.
This allows to reduce the power consumption of the baseband units by a fraction $q_b<1$ by turning OFF the radio frequency components~\cite{pap:frenger11,pap:hilthunen13}.
However, during the OFF state, BSs need to sense the UEs in their vicinity and thus, have non-zero energy consumption ensured by $q_b>0$.
Let $I_b$ be the transmission indicator of BS $b$ such that $I_b=1$ indicates the ON state while $I_b=0$ reflects the OFF state.
Thus, the power consumption of BS $b$ can be given by:
\begin{equation}\label{eqn:state_based_power}
P^{\texth{Total}}_b =
\begin{cases}
q_bP_b^{\texth{Base}} &\mbox{if}~I_b=0, \\
\big( \frac{1}{\vartheta_b}P_b^{\texth{Work}} + P_b^{\texth{Base}} \big) &\mbox{if}~I_b=1,
\end{cases}
\end{equation}
where $\vartheta_b(<1)$ captures the efficiency of the power amplifier unit and losses in the main supply and cooling units of BS $b$~\cite{pap:georgios12B}.
We note that, with this model, we are able to capture not only the power consumption due to the actual transmission, but also the variations in the power consumption which are required to maintain the BS in either ON or OFF modes.

For the downlink transmission, the channels between BSs and UEs are modeled
as additive white Gaussian noise (AWGN) channels with noise variance $N_0$~\cite{onln:3gpp10}.
We assume that all BSs transmit on the same frequency spectrum (i.e., co-channel deployment).
Thus, the data rate of a UE at location $\vectx\in\Set{L}_b$ which is served by BS $b\in\Set{B}$ is given by:
\begin{equation}\label{eqn:rate_ue}
R_b(\vectx) = \omega\log_2\Bigg(1 + \frac{ P_bh_{b}(\vectx) }{ \sum_{\forall b'\in\Set{B}\setminus b} P^{\texth{Work}}_{b'}I_{b'}h_{b'}(\vectx) + N_0 } \Bigg),
\end{equation}
where $\omega$ is the bandwidth and $h_{b'}(\vectx)$ is the channel gain between BS $b'$ and UE at $\vectx$.
We assume that each BS uses orthogonal resource blocks in the time domain to serve its UEs, and there is no central controller available in the system to coordinate the transmission power and RB allocation among BSs.

\subsection{Problem Formulation}\label{sec:problem_form}

Our objective is to improve the energy saving of the network while minimizing the time loads of all the BSs during their downlink transmission.
The downlink transmission consists of three phases:
1) all the UEs determine the BSs to which they wish to connect,
2) the decentralized transmission power and ON/OFF state selection of BSs in each cluster is performed, and
3) the data transmission and acquiring the knowledge on the network based on  limited information -- cluster-based \emph{distributed learning} -- take place.

\subsubsection{User association}\label{sec:ue_association_policy}

Classical UE association mechanisms are often based on the maximum received signal strength (RSS) or signal to interference and noise ratio (SINR)~\cite{pap:richter12}.
However, these criteria are oblivious to the time load (and thus, the expected number of flows and delays) which may lead to overloading BSs yielding lower spectral efficiencies.
Thus, a smarter mechanism in which BSs advertise their current load to all UEs within their coverage area is needed~\cite{jnl:kyuho11,jnl:hongseok12}.

At time instant $t$, each BS $b$ advertises its estimated load $\hat{\rho}_b(t)$ via a broadcast control message.
Considering both the received signal strength and load at time $t$, UE at location $\vectx$ selects  BS $b(\vectx,t)$, where $\vectx\in\Set{L}_{b(\vectx, t)}$, as follows:
%
\begin{equation}\label{eqn:ue_association}
b(\vectx, t) = \argmax_{b \in \Set{B}} \Big\{ \Big(1-\hat{\rho}_b(t)\Big)^{n}P_b^{\texth{Rx}}(t) \Big\}.
\end{equation}
Here, $P_b^{\texth{Rx}}(t) =  P_b(t)I_b(t) h_{b}(\vectx,t)$
is the received signal power at UE in location $\vectx$ from BS $b$ at time $t$.
The impact of the load is determined by the coefficient $n\geq 0$.
The classical reference signal strength indicator (RSSI)-based UE association is a special case of (11) where $n = 0$.
For $n>0$, the BS load affects the UE association alongside the received signal power.
According to (\ref{eqn:ue_association}), from the UE perspective, each UE selects a BS assuming it is the most suitable one.
However, the BS might have a preference of switching OFF by offloading its traffic to one of its neighbor.
Thus, the actual serving BS might be different from the BS chosen by the UE.
Hereinafter, we refer to UE's choice as the \emph{anchor BS}.

Since the BSs need to estimate their load beforehand, the estimation must accurately reflect the actual load.
Here, we use the notion of moving time-average load estimation $\hat{\rho}_b(t)=\int_{\vectx\in\Set{L}_b} \hat{\rho}_b(\vectx,t) d\vectx$ at time $t$ based on history as follows:
%
\begin{equation}\label{eqn:load_estimation}
\hat{\rho}_b(\vectx,t) = \nu(t) \rho_b(\vectx,t-1) + \Big( 1 - \nu(t) \Big) \hat{\rho}_b(\vectx,t-1),
\end{equation}
where $\hat{\rho}_b(\vectx,t)$ is the moving time-average load estimation at location $\vectx$, $\nu(t)$ is the learning rate of the load estimation\footnote{
	Choosing the learning rate such that $\nu(t)\in[0,1]$ for all $t\in\mathbb{N}$ and $\lim_{t\to\infty} \nu(t) = 0$ ensures that $\hat{\rho}_b(t)$ yields the time average load of BS $b$ as $t\to\infty$.
} and the initial condition as a predefined preferred load, i.e. $\hat{\rho}_b(\vectx,0)=\rho^{\texth{Pref}}$.
Leveraging different time-scales, we assume that $\nu(t)$ is selected such that the load estimation procedure~(\ref{eqn:load_estimation}) is much slower than the UE association process given in (\ref{eqn:ue_association}).

\subsubsection{Decentralized transmit power and ON/OFF state selection}\label{subsec:decent_power}

Once the UE association is done, the BSs need to optimize their transmissions.
The configuration of the entire network is defined by the transmit powers and the ON/OFF states of all BSs.
This configuration is captured by a transmission power vector $\vect{P}=(\vectab{P}{|\Set{B}|})$
and an ON/OFF state indicator vector $\vect{I}=(\vectab{I}{|\Set{B}|})$.
Therefore, we use the tuple $(\vect{P},\vect{I})$ to represent the network configuration.

For a given network configuration $(\vect{P},\vect{I})$, BS $b$ consumes $P_b^{\texth{Total}}$ amount of energy and handles a load $\rho_b$ by serving a set of UEs in its coverage area $\Set{L}_b$.
Based on (\ref{eqn:load_bs})-(\ref{eqn:rate_ue}), it can be seen that $\frac{\partial \rho_b}{\partial P_b}<0$.
Therefore, from the perspective of uncoordinated BSs that are oblivious to the choices made by the rest of the network, each individual BS will conclude that a load reduction is viable only with an increased energy consumption.
Hence, there is a tradeoff between load and energy consumption reduction.
Here, for each BS $b\in\Set{B}$, we define a cost function that captures both energy consumption and load, as follows:
\begin{equation}\label{eqn:bs_utility_metric}
\Upsilon_b(\vect{P},\vect{I}) = \lambda \underbrace{\frac{P_b^{\texth{Total}}}{\big( \frac{1}{\vartheta_b} P_b^{\texth{Max}} + P_b^{\texth{Base}}\big)}}_{\text{energy consumption}} \; + \; \mu \underbrace{
	\vphantom{\frac{P_b^{\texth{Total}}}{\big( \frac{1}{\vartheta_b} P_b^{\texth{Max}} + P_b^{\texth{Base}}\big)}}
	\rho_b}_{\text{load}},
\end{equation}
where the coefficients $\lambda$ and $\mu$ are weight parameters that indicate the impact of energy and load on the cost, and $P_b^{\texth{Max}}$ is the maximum transmit power of BS $b$.
The objective is to minimize the total network-wide cost $\sum_{\forall b\in\Set{B}}\Upsilon_b(\vect{P},\vect{I})$.

Whenever a given BS switches OFF, it should maintain service to its UEs via a re-association process.
For a proper re-association, BSs need to coordinate with the rest of the network which requires a centralized controller.
Such a centralized approach involves large information exchange within the network incurring large delays.
Endowing individual BSs or sets of well-chosen and locally-coupled BSs with decision making capabilities is crucial.
Therefore, clustering is leveraged to group locally-coupled BSs in terms of mutual interference and  load together.
Allowing coordination per cluster enables efficient UE load balancing with limited information exchange.
Furthermore, intra-cluster coordination is carried out for joint load balancing and interference management.

In view of the above, we consider that all BSs are grouped into sets of judiciously-chosen clusters $\allclust = \{\vectab{\Set{C}}{|\overline{\vect{\Set{C}}}|}\}$.
Here, a given cluster $\Set{C}_i\in\allclust$ consists of a set of BSs which can coordinate with one another.
The clustering mechanism is discussed in Section \ref{sec:cluster_solution} in details.
For each cluster $\Set{C}_i\in\allclust$, the per-cluster cost $\Upsilon_{\Set{C}_i}$ is the aggregated cost of each cluster member, $\Upsilon_{\Set{C}_i}(\vect{P},\vect{I}) = \sum_{\forall b\in\Set{C}_i} \Upsilon_b(\vect{P},\vect{I})$.
Thus, the minimization of the network cost is given by the following optimization problem:
%
\begin{subequations}\label{eqn:optimization_energy_efficiency_network}
	\begin{eqnarray}\label{eqn:objective_function}
	\underset{\vect{P},\vect{I},\allclust}{\text{minimize}} && \textstyle\sum_{\forall\Set{C}_i\in\allclust} \Upsilon_{\Set{C}_i}(\vect{P},\vect{I}), \\ 
	\label{cns:BI_belongs_to_cluster} \text{subject to} && |\Set{C}_i|\geq 1, \quad \forall \Set{C}_i\in\overline{\vect{\Set{C}}}, \\
	\label{cns:exclusive_clusters} && \Set{C}_i\cap\Set{C}_j=\emptyset, \quad \forall \Set{C}_i,\Set{C}_j\in\overline{\vect{\Set{C}}},~\Set{C}_i\neq\Set{C}_j, \\
	\label{cns:tx_params} &&
	\rho_b \!\in\! [0,1],
	P_b \!\leq \!P_b^{\texth{Max}}\!\!\!,
	I_b \!\in \!\{0,1\}, \forall b\in\Set{B}.
	\end{eqnarray}
\end{subequations}
Here, constraints (\ref{cns:BI_belongs_to_cluster}) and (\ref{cns:exclusive_clusters}) ensure that any BS is part of only one cluster.
Solving (\ref{eqn:optimization_energy_efficiency_network}) involves BS clustering, UE association, and transmission power optimization.

\subsubsection{Data transmission and inter-cluster distributed learning}

Due to the absence of a central controller, the clusters need to determine the network configuration $(\vect{P},\vect{I})$ based on their limited knowledge.
Using the knowledge of the cluster members, clusters need to implicitly coordinate and learn their transmission powers and ON/OFF states, and find the best network configuration which solves (\ref{eqn:optimization_energy_efficiency_network}).
As shown in Section \ref{sec:game_solution}, the cost minimization problem can be posed as a noncooperative game that allows to minimize the costs of each cluster.


\section{Cluster Formation and Intra-Cluster Coordination}\label{sec:cluster_solution}

Solving (\ref{eqn:optimization_energy_efficiency_network}) requires an efficient BS clustering and intra-cluster coordination mechanism.
For clustering, we map the system into a weighted graph $G=(\Set{B},\Set{E})$.
Here, the set $\Set{B}$ represents the BSs while the set $\Set{E}$ represents the links between locally-coupled BSs.

\subsection{Similarity Matrix-Based Clustering}\label{subsec:similarities}

The key step in clustering is to identify similarities between BSs to group BSs with similar characteristics.
This will allow to perform coordination between BSs with little signaling overhead.
While many similarity features can be used, two important ones include the locations of BSs and their loads.
The BS locations define the capability of coordination among nearby BSs while the BS loads reflect mutual interference and the willingness of cooperation.
Next, we examine a number of techniques to calculate similarities between BSs based on location and load.

\paragraph{Neighborhood based on BS adjacency and static Gaussian similarity}\label{sec:loaction_similarity}

Consider the graph $G=(\Set{B},\Set{E})$.
The set of edges $\Set{E}$ indicates the adjacency between nodes.
We use a parameter $e_{bb'}$ to characterize the presence of a link between nodes $b$ and $b'$ when $e_{bb'}=1$.
Let $\vecty_b$ be the coordinates of the vertex (or BS) $b\in\Set B$ in the Euclidean space.
The neighborhood of node $b$ is defined by the adjacency of the nodes in its $\varepsilon_d$-neighborhood:
\begin{equation}\label{eqn:neighborhood}
\Set{N}_{b}=\{b'|~e_{bb'}=e_{b'b}=1, \|\vecty_b-\vecty_{b'}\|\leq\varepsilon_d \}.
\end{equation}

The links between the vertices are weighted based on their similarities.
The \emph{radial basis function kernel} known as the Gaussian similarity, is a popular kernel function used for location based classifications in machine learning~\cite{pap:yuille98,lec:shashua09,pap:sung11}.
Thus, the Gaussian similarity is based on the distance between BSs $b$ and $b'$ calculated as follows:
\begin{equation}\label{eqn:similarity_distance}
s^{d}_{bb'} =
\begin{cases}
\exp\big(\frac{-\|\vecty_b-\vecty_{b'}\|^2}{2\sigma_d^2}\big) & \mbox{if}~\|\vecty_b-\vecty_{b'}\|\leq\varepsilon_d,\\
0 & \mbox{otherwise},
\end{cases}
\end{equation}
where $\sigma_d$ controls the impact of neighborhood size\footnote{
	For a given $\varepsilon_d$, the range of the distance-based similarity for any two connected BSs is $[e^{-\varepsilon_d/2\sigma_d^2},1]$. The lower bound is determined by the selection of $\sigma_d$.
}.
Furthermore, the distance-based similarity matrix $\vect{S}^d$ is formed using $s^d_{bb'}$ as the $(b,b')$-th entry.
The rationale behind (\ref{eqn:similarity_distance}) is that the BSs located far from each others have low similarities, and as they come closer, similarities increase in which BSs are more likely to cooperate with each other.

\paragraph{Load-based dissimilarity}

Unlike the static distance-based clustering in (\ref{eqn:similarity_distance}), load-based clustering provides a more dynamic  manner of grouping neighboring BSs in terms of time load.
Consider BSs $b$ and $b'$ with the loads $\rho_b$ and $\rho_{b'}$.
The BSs with similar loads yield no benefit by offloading traffic to one another, and thus, there is no advantage for such BSs to cooperate when $\rho_b\approx\rho_{b'}$.
On the other hand, if $\rho_b\gg\rho_{b'}$, BS $b'$ has the capability of handling additional traffic and thus, is capable to cooperate with $b$.
Based on these observations, the weight of the link between $b$ and $b'$ should be \emph{(i)} positive and \emph{(ii)} increasing with the load difference.
Therefore, a dissimilarity metric is used for the time load instead of similarity.
Thus, a modified version of (\ref{eqn:similarity_distance}) is used to calculate load-based similarity between BSs $b$ and $b'$ as follows:
\begin{equation}\label{eqn:similarity_load}
s^l_{bb'} = \exp\bigg(\frac{\|\rho_b-\rho_{b'}\|^2}{2\sigma_l^2}\bigg),
\end{equation}
where $\sigma_l$ controls the range of the similarity\footnote{
	The range of the load-based similarity is $[1,e^{1/2\sigma_l^2}]$. The upper bound of the dissimilarity depends on the choice of $\sigma_l$.
}.
The value of $s^l_{bb'}$ is used as the $(b,b')$-th entry of the load-based similarity matrix $\vect{S}^l$.

\paragraph{Combining the similarities}

One approach for combining $\vect{S}^d$ and $\vect{S}^l$ is by forming a linear combination such as $\big(\theta \vect{S}^d + (1-\theta)\vect{S}^l\big)$ with $0\leq\theta\leq 1$.
According to (\ref{eqn:neighborhood}), any two BSs $b,b'$ with $e_{bb'}=0$ will not be able to perform any coordination, and thus, the similarity between BSs $b$ and $b'$ needs to be equal to zero.
Since $s_{bb'}^l>0$ as per (\ref{eqn:similarity_load}), a linear combination of $s_{bb'}^d$ and $s_{bb'}^l$ with $\theta\in[0,1)$ always results in a positive similarity between BSs $b$ and $b'$ which contradicts the previous claim.
Therefore, the classical linear combination of $\vect{S}^d$ and $\vect{S}^b$ cannot be used for clustering.

In order to preserve the characteristics of similarities, the joint similarity $\vect S$ with $s_{bb'}$ as the $(b,b')$-th element is formulated as follows:
\begin{equation}\label{eqn:similarity_joint}
s_{bb'} = (s_{bb'}^d)^\theta \cdot (s_{bb'}^l)^{(1-\theta)},
\end{equation}
where $0\leq\theta\leq 1$ controls the impact of the distance and the load similarities on the joint similarity.
Here, cooperation is possible only if a physical link between nodes exists\footnote{
	We assume that $0^\theta=0$ is held for $\theta=0$ as well.
}, i.e. $\forall \theta\in[0,1]$, $e_{bb'}=0\implies s_{bb'}=0$.

\begin{theorem}\label{thm:gauss_sim}
	The joint similarity $\vect S$ formulated in (\ref{eqn:similarity_joint}) simplifies to a Gaussian similarity, and thus, all the properties of the Gaussian similarity hold for $\vect S$.
\end{theorem}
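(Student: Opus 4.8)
The plan is to show that the exponent appearing in (\ref{eqn:similarity_joint}) can be rewritten as a single negative-definite quadratic form evaluated on a combined feature vector, which is exactly the structure defining a Gaussian (RBF) kernel. First I would substitute the definitions (\ref{eqn:similarity_distance}) and (\ref{eqn:similarity_load}) into (\ref{eqn:similarity_joint}). On the support where $e_{bb'}=1$ and $\|\vecty_b-\vecty_{b'}\|\leq\varepsilon_d$, taking logarithms gives
\begin{equation}\label{eqn:log_joint}
\log s_{bb'} = -\,\theta\,\frac{\|\vecty_b-\vecty_{b'}\|^2}{2\sigma_d^2} \;+\; (1-\theta)\,\frac{\|\rho_b-\rho_{b'}\|^2}{2\sigma_l^2}.
\end{equation}
The sign mismatch between the two terms is the crux: a Gaussian similarity must be of the form $\exp(-\|\vect z_b-\vect z_{b'}\|^2/2\sigma^2)$, so I need the combined exponent to be non-positive and expressible as a genuine squared norm. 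The resolution is that the load contribution is a \emph{dissimilarity}, so I would absorb its sign by treating the load coordinate as lying in a pseudo-Euclidean direction, or, more cleanly, by the substitution $\rho_b \mapsto i\rho_b$ (equivalently, working with $-\|\rho_b-\rho_{b'}\|^2$); this is the standard observation that $s^l_{bb'}=\exp(+\|\rho_b-\rho_{b'}\|^2/2\sigma_l^2)$ is the RBF kernel with an imaginary bandwidth.

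Next I would define the augmented feature map
\[
\vect z_b \;=\; \Bigl(\sqrt{\tfrac{\theta}{\sigma_d^2}}\,\vecty_b,\; \sqrt{\tfrac{1-\theta}{\sigma_l^2}}\,(i\rho_b)\Bigr),
\]
so that $\|\vect z_b-\vect z_{b'}\|^2 = \frac{\theta}{\sigma_d^2}\|\vecty_b-\vecty_{b'}\|^2 - \frac{1-\theta}{\sigma_l^2}\|\rho_b-\rho_{b'}\|^2$, and hence
\[
s_{bb'} \;=\; \exp\Bigl(-\tfrac12\,\|\vect z_b-\vect z_{b'}\|^2\Bigr)\qquad\text{whenever } e_{bb'}=1,
\]
with $s_{bb'}=0$ otherwise, which matches the template of a Gaussian similarity built on the feature vectors $\{\vect z_b\}$ restricted to the neighborhood graph (exactly as $\vect S^d$ in (\ref{eqn:similarity_distance}) is a Gaussian similarity restricted to the $\varepsilon_d$-neighborhood). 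I would then note that all the structural properties used elsewhere — symmetry $s_{bb'}=s_{b'b}$, unit self-similarity $s_{bb}=1$, the vanishing of similarity off the physical link set, and the monotone decay in the combined feature distance — are inherited directly from this representation, since they hold for every Gaussian similarity.

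The main obstacle is making the "combined feature distance" rigorous despite the opposite signs: the quantity $\|\vect z_b-\vect z_{b'}\|^2$ is not a true Euclidean distance (it can be negative), so one must either (i) justify the complex/pseudo-Euclidean embedding as merely a notational device that reproduces the correct kernel values, or (ii) restate "Gaussian similarity" in the weaker sense actually needed downstream, namely a kernel of the form $\exp(-\tfrac12 Q(\vect z_b-\vect z_{b'}))$ for a fixed (possibly indefinite) quadratic form $Q$ together with the graph-support truncation. I expect the paper to take the second, lighter route; the remaining steps (substitution, log-linearity, reading off symmetry and the boundary cases) are routine.
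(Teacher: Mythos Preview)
Your proposal is correct and essentially identical to the paper's proof: the paper substitutes (\ref{eqn:similarity_distance}) and (\ref{eqn:similarity_load}) into (\ref{eqn:similarity_joint}), then defines the augmented feature map $\vect{\zeta}_b = \big(\tfrac{\sqrt{\theta}}{\sigma_d}\sigma\, y_b^{(1)},\ \tfrac{\sqrt{\theta}}{\sigma_d}\sigma\, y_b^{(2)},\ \tfrac{\imath\sqrt{1-\theta}}{\sigma_l}\sigma\,\rho_b\big)$ with $\imath=\sqrt{-1}$ and an arbitrary scale $\sigma$, and reads off $s_{bb'}=\exp\big(-\|\vect{\zeta}_b-\vect{\zeta}_{b'}\|^2/2\sigma^2\big)$ --- exactly your $\vect z_b$ up to the free constant. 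One correction to your expectation: the paper takes your option (i), the complex embedding as a formal device, and does not comment on the indefiniteness of the resulting quadratic form; your discussion of that subtlety is in fact more careful than the paper's own treatment.
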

\begin{proof}
	Let $s_{bb'} = (s_{bb'}^d)^\theta \cdot (s_{bb'}^l)^{(1-\theta)}$ is the $(b,b')$-th element of $\vect S$ with  $\theta\in[0,1]$, $s^{d}_{bb'} = \exp\big(\frac{-\|\vecty_b-\vecty_{b'}\|^2}{2\sigma_d^2}\big)$, $s^{l}_{bb'} = \exp\big(\frac{\|\rho_b-\rho_{b'}\|^2}{2\sigma_l^2}\big)$, and $\vecty_i = (y_i^{(1)},y_i^{(2)})$ for $i=\{b,b'\}$.
	Then,
	\begin{align}
	\nonumber s_{bb'} &= (s_{bb'}^d)^\theta\cdot(s_{bb'}^l)^{(1-\theta)} \\
	&= \exp\bigg(\frac{-(y_{b}^{(1)}-y_{b'}^{(1)})^2-(y_{b}^{(2)}-y_{b'}^{(2)})^2}{2(\sigma_d^2/\theta)} + \frac{(\rho_{a}-\rho_{b})^2}{2(\sigma_l^2/1-\theta)}\bigg).
	\end{align}
	Define a mapping $\vect{\zeta}_i=(\zeta_i^{(1)},\zeta_i^{(2)},\zeta_i^{(3)})$ for $i=\{b,b'\}$ with,
	\begin{align}\label{eqn:mapping_coordinates}
	\zeta_i^{(1)} = \frac{y_i^{(1)}\sqrt{\theta}}{\sigma_d}\sigma, && \zeta_i^{(2)} = \frac{y_i^{(2)}\sqrt{\theta}}{\sigma_d}\sigma, && \zeta_i^{(3)} = \frac{\imath\rho_i\sqrt{1-\theta}}{\sigma_l}\sigma,
	\end{align}
	where $\imath=\sqrt{-1}$ and $\sigma$ is an arbitrary constant.
	Using the above mapping,
	\begin{align}
	\nonumber s_{bb'}
	= \exp\bigg(\frac{-\|\vect{\zeta}_{b}-\vect{\zeta}_{b'}\|^2}{2\sigma^2}\bigg).
	\end{align}
	Thus, $s_{bb'}$ follows the relation of Gaussian similarity and thus, $\vect S$ satisfies the properties of Gaussian similarity matrix.
\end{proof}

Note that as $\vect S$ follows the form of a Gaussian similarity, the location and the load of any BS can be mapped into a 3-dimensional (3D) coordinates.
Therefore, any clustering method which uses the joint similarity of distance and load can be interpreted as a distance-based clustering in a 3D space.

\subsection{Clustering}\label{sec:clustering_techniques}

Once the similarities $\vect S$ of the graph $G=(\Set{B},\Set{E})$ are formed, BS clustering is performed.
There are many clustering mechanisms available in the literature~\cite{book:rokach05,jnl:luxburg07,book:jacob09,pap:ogston03,pap:backstrom11}.
Given the knowledge of the entire similarity matrix, any of these clustering methods can be applied.
However, the main challenge is the need to obtain the full similarity matrix of the entire network.
In essence, clustering methods can be categorized into two subgroups: centralized and decentralized clustering.
Centralized clustering requires the knowledge of the entire similarity matrix, based on which, a central controller performs the clustering process.
Decentralized clustering methods depend on the information gathered from neighbors and decision making is done per node.

With this in mind, we focus on three clustering methods: \emph{i)} $k$-mean clustering, \emph{ii)} spectral clustering, and \emph{iii)} peer-to-peer searched-based clustering.
The main intention is to identify the performance gains due to the capability of exploiting nodes' physical locations (spectral clustering over $k$-mean clustering), and the capability of the decentralization (peer-to-peer search-based clustering over spectral clustering and $k$-mean clustering).
We assume that each method yields a set of clusters $\vect{\overline{\Set{C}}}=\{\vectab{\Set C}{|\vect{\overline{\Set{C}}}|}\}$ where cluster ${\Set C}_i\in\vect{\overline{\Set{C}}}$ consists of a set of BSs who  cooperate with each other.

\subsubsection{$k$-mean clustering~\cite{book:rokach05}}

The objective of $k$-mean clustering is to partition the set of nodes into $|\vect{\overline{\Set{C}}}|=k$ clusters in which each node belongs to the cluster with the nearest mean distance.
This is commonly used for nodes distributed in a physical space (an area or a volume).
Formally speaking, $k$-mean clustering is given by,
\begin{eqnarray}\label{eqn:k-mean_clusters}
\underset{\allclust}{\argmin} && \textstyle \sum_{\forall\Set{C}_i\in\allclust} \sum_{\forall b\in\Set{C}_i} \|\vect{\zeta}_b-\overline{\vect{\zeta}}_{\Set{C}_i}\|,
\end{eqnarray}
where $\vect{\zeta}_b$ represents the coordinates of node $b\in\Set{B}$ and $\overline{\vect{\zeta}}_{\Set{C}_i}=\frac{1}{|\Set{C}_i|}\sum_{\forall b\in\Set{C}_i}\vect{\zeta}_b$ is the center of the cluster $\Set{C}_i$.
As per Theorem \ref{thm:gauss_sim}, using the mapping between the location $\vecty_b$ and the load $\rho_b$ of BS $b$  into a 3D vector $\vect{\zeta}_b$, solving (\ref{eqn:k-mean_clusters}) yields BS clusters based on joint similarity.
However, this clustering algorithm requires the knowledge of all the locations and loads of BSs and thus, it is categorized as a centralized clustering method.

\subsubsection{Spectral clustering~\cite{jnl:luxburg07}}

Unlike $k$-mean clustering, spectral clustering method exploits the connectivity and the compactness, thus, the geometry of the node distribution in a graph.
The graph Laplacian matrix is formed as $\vect{L}=\vect{D}-\vect{S}$ where $\vect{D}$ is the diagonal matrix with the $b$-th diagonal element given as $\sum_{b'=1}^{|{\Set{B}}|} s_{bb'}$.
For a given cluster size $|\allclust|$, the concatenation of the smallest $|\allclust|$ eigenvectors of $\vect{L}$ can be used to evaluate the modified $k$-dimensional coordinates $\vectab{\vect{\psi}}{|\Set{B}|}$.
Then, $k$-mean clustering is applied on $\vectab{\vect{\psi}}{|\Set{B}|}$ to produce the clusters $\allclust$.

The number of clusters $k$ is 
closely related to the eigenvalues of $\vect L$ and is given by~\cite{jnl:luxburg07};
\begin{equation}\label{eqn:spectral_clust_number}
k = \underset{i}{\argmax} \big(|\varsigma_{i+1} - \varsigma_i|\big), \quad i=\Seta{|\Set B|},
\end{equation}
where $\varsigma_i$ is the $i$-th smallest eigenvalue of $\vect L$.
When nodes are distributed as $k$ groups, the first $k$ eigenvalues become small while the $(k+1)$-th eigenvalue becomes relatively large.
As the node distribution follows more likely a uniform distribution, the difference between consecutive eigenvalues becomes constant.
In such cases, when the area of interest is large, nodes will be isolated from each other and thus, $k$ will become large.
If the area is small, all the nodes are close to each other, and thus, will be grouped into a single cluster, i.e. $k=1$.
Due to fact that the spectral clustering algorithm requires full knowledge of $\vect S$, it is also a centralized clustering mechanism.

\subsubsection{Peer-to-peer search-based clustering}

The location based decentralized clustering method for a multi-agent network is presented in~\cite{pap:ogston03}.
Here, each agent in the system searches other similar agents in a peer-to-peer (P2P) fashion.
The knowledge over the neighborhood is sufficient to form the clusters, and the size of any cluster is limited by a predefined parameter ($|\Set{C}|^{\texth{Max}}$), i.e. $|\Set{C}_i|\leq|\Set{C}|^{\texth{Max}}$ for any $\Set{C}_i\in{|\allclust|}$.
Modifying the parameters of the algorithm in~\cite{pap:ogston03} including both location and load information, we leverage the P2P-search based (P2P-SB) clustering technique for BSs.

\subsection{Coordination}\label{subsec:coordination}

Clustering BSs serves three main purposes: {\it (i)} reducing the signaling overhead required for coordination among BSs compared to a system with a centralized controller, {\it (ii)} reducing intra-cluster interference among locally coupled BSs, and {\it (iii)} efficiently offloading UEs to ON-BSs from BSs which need to switch OFF.
As intra-cluster coordination fulfills the above requirements, the number of switched-OFF BSs is expected to increase compared to the case without clusters and coordination.

Once clusters are formed, the lightly loaded BS within the cluster is selected as the cluster head\footnote{
	Partitioning the macrocell area and selecting the BSs located at the center of each partition and assigning them as cluster heads is another option.
}.
The function of a cluster head is to coordinate the transmissions between the cluster members.
Consequently, the entire traffic load of the cluster is distributed between its members in which orthogonal resource allocation helps to mitigate intra-cluster interference.
As the BSs within a cluster have the ability to coordinate, the entire cluster can be seen as a \emph{single super BS} serving all UEs within its vicinity as illustrated in Fig.~\ref{fig:coordination}.

\begin{figure*}[t]
	\centering
	\includegraphics[width=.88\linewidth]{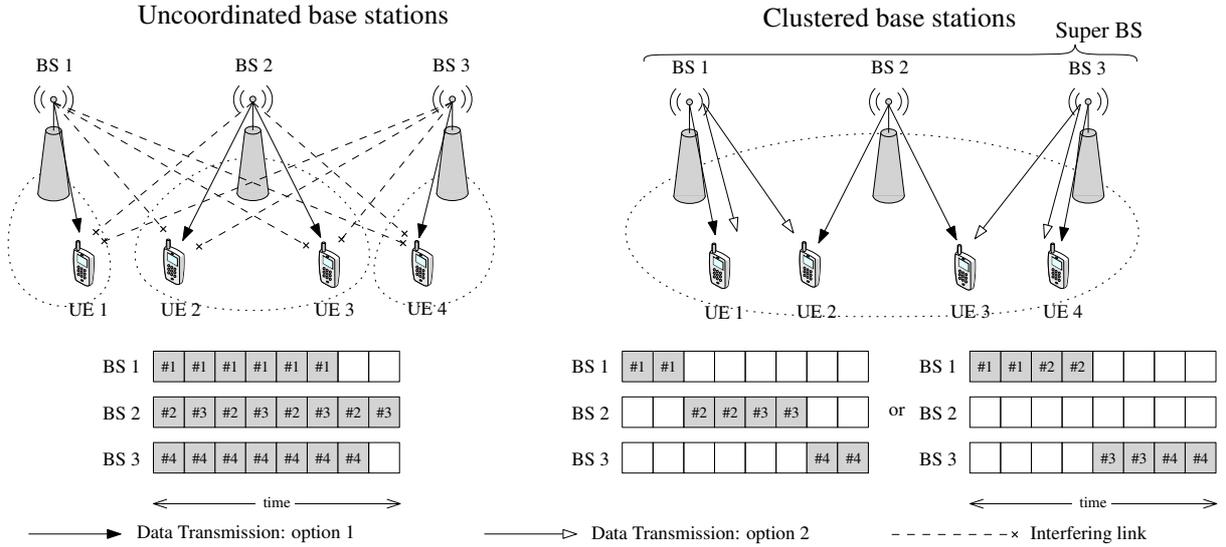}
	\caption{Intra-cell interference mitigation through clustering based coordination. Clustered base stations do not interfere with each other and users may offloaded.}
	\label{fig:coordination}
\end{figure*}

By enabling such re-association of UEs between BSs within a given cluster, BSs can efficiently be switched OFF while ensuring users' QoS.
The actual load of each BS is unknown before the downlink transmission takes place.
Therefore, UE offloading within the cluster is carried out with the objective of minimizing the estimated load in a given cluster.
This can eventually reduce the number of UEs served with low rates.
Let $\Set{M}_{\Set{C}_i}$ be the set of UEs whose anchor BSs belong to cluster $\Set{C}_i$.
Here, we focus on a discrete coverage area by redefining $\Set{L}_b$ as the collection of circular regions with negligible radii around the UEs associated with BS $b$.
Thus, the integration forms of $\rho_b$ and $\hat{\rho}_b$ can be modeled as a summation over the set of UEs.
Let $z_{bm}^{\Set{C}_i}\in\{0,1\}$ be an indicator which defines the connectivity between UE $m\in\Set{M}_{\Set{C}_i}$ and BS $b\in\Set{C}_i$, such that $z_{bm}^{\Set{C}_i}=1$ if UE $m$ is served by BS $b$ and, $z_{bm}^{\Set{C}_i}=0$ otherwise\footnote{
	Note that $z_{bm}^{\Set{C}_i}=1\Leftrightarrow\vectx_m\in\Set{L}_b$ and $z_{bm}^{\Set{C}_i}=0\Leftrightarrow\vectx_m\notin\Set{L}_b$. Therefore, the BS load $\rho_b$ is a linear combination of $z_{bm}^{\Set{C}_i}, \forall m\in\Set{M}_{\Set{C}_i}$ and $\vect{\varrho}_b$ as in (\ref{eqn:load_bs}).
}.
Thus, the scheduling problem within cluster $\Set{C}_i\in\allclust$ can be written as follows:
\begin{subequations}\label{eqn:scheduling_MI_opt}
	\begin{eqnarray}
	\underset{\vect{z}^{\Set{C}_i}}{\text{minimize}} && \textstyle\sum_{\forall b\in\Set{C}_i}  \sum_{\forall m\in\Set{M}_{\Set{C}_i}} \hat{\rho}_{bm}z_{bm}^{\Set{C}_i} \\
	\label{cns:unique_connection} \text{subject to} && \textstyle\sum_{\forall b\in\Set{C}_i} z_{bm}^{\Set{C}_i} = 1, \quad \forall m\in\Set{M}_{\Set{C}_i} \\
	\label{cns:boolean} && z_{bm}^{\Set{C}_i} \in \{0,1\}, \quad \forall m\in\Set{M}_{\Set{C}_i}, \forall b \in \Set{C}_i,
	\end{eqnarray}
\end{subequations}
where $\vect{z}^{\Set{C}_i}$ consists of all $z_{bm}^{\Set{C}_i}$ indicators for all UEs $m\in\Set{M}_{\Set{C}_i}$ and $b \in \Set{C}_i$.
The constraints ensure that any UE is served by a single cluster member.
This is a \emph{mixed-integer linear program} (MILP) and thus, an NP-hard problem.
Nevertheless, relaxing the integer constraint (\ref{cns:boolean}) yields a \emph{linear program} which can be easily solved.
Therefore, the relaxed problem is given by:
\begin{subequations}\label{eqn:scheduling_relaxed}
	\begin{eqnarray}
	\underset{\hat{\vect{z}}^{\Set{C}_i}}{\text{minimize}} && \textstyle\sum_{\forall b\in\Set{C}_i} \sum_{\forall m\in\Set{M}_{\Set{C}_i}} \hat{\rho}_{bm} \hat{z}_{bm}^{\Set{C}_i} \\
	\label{cns:unique_connection_new} \text{subject to} && \textstyle\sum_{\forall b\in\Set{C}_i} \hat{z}_{bm}^{\Set{C}_i} = 1, \quad \forall m\in\Set{M}_{\Set{C}_i} \\
	\label{cns:boolean_relaxed} && 0 \leq \hat{z}_{bm}^{\Set{C}_i} \leq 1, \quad \forall m\in\Set{M}_{\Set{C}_i}, \forall b \in \Set{C}_i.
	\end{eqnarray}
\end{subequations}
where $\hat{\vect{z}}^{\Set{C}_i}$ is the relaxed optimization variable of $\vect{z}^{\Set{C}_i}$.
A suboptimal solution for (\ref{eqn:scheduling_MI_opt}), $({\vect{z}}^{\Set{C}_i})^\star$, is obtained as follows:
\begin{equation}\label{eqn:schedule_solution}
(z_{bm}^{\Set{C}_i})^\star =
\begin{cases}
1 & \mbox{if~} (\hat{z}_{bm}^{\Set{C}_i})^\star = \argmax_{\forall b' \in \Set{C}_i}\Big( (\hat{z}_{b'm}^{\Set{C}_i})^\star \Big), \\
0 & \mbox{otherwise}.
\end{cases}
\end{equation}

\subsection{Modeling the Overhead Costs Due to Intra-cluster Coordination}\label{subsec:overhead_cost}

The signaling overhead due to intra-cluster coordination is a significant factor for a fair comparison between clustered and non-clustered approaches.
However, most of the existing works \cite{pap:hoisseini12,jnl:amr14,pap:bandyopadhyay03,jnl:baracca14}, and \cite{jnl:katranaras14} simply ignore the intra-cluster signaling overhead while some studies such as  \cite{jnl:hammoudeh08} and \cite{pap:band03} point out that signaling overhead depends on the cluster size in terms of both the number of members and the area within which the cluster is located.
However, those works do not provide any concrete model of overhead.
Motivated by the above facts, we model the signaling overhead cost due to intra-cluster coordination as an increment of power consumption $\delta P_b^{Base}$ and it is calculated by,
\begin{equation}\label{eqn:overheadcost_model}
\delta P_b^{\texth{Base}} = \chi(|\Set{N}_b|-1)\varepsilon_d,
\end{equation}
where the parameter $\chi$ defines the sensitivity of the power increment to neighborhood size and range.
Clearly, the isolated BSs (when either $\varepsilon_d=0$ or $|\Set{N}_b|=1$ with $\varepsilon_d>0$) do not suffer from signaling overhead due to intra-cluster coordination.
For a clustered BS, the total overhead cost is the summation of overheads due to the individual communications with each member of the cluster.
Assuming all the BSs in a cluster are treated equally yields an equal individual overhead and, thus, the total overhead will be varying \emph{linearly} with the size of the neighborhood.
As the range of neighborhood $\varepsilon_d$ increases, there is a higher chance to increase the cluster size and, consequently, the signaling overhead.
Since $\varepsilon_d$ and $|\Set{N}_b|$ are coupled with one another, we simply assume that the signaling overhead due to communication with another generic BS has a linear relation with the range of the neighborhood.
Thus, a change of the range $\varepsilon_d$ alone has a linear impact on $\delta P_b^{\texth{Base}}$ while the changes of $|\Set{N}_b|$ due to $\varepsilon_d$ will further impact the signaling overhead $\delta P_b^{\texth{Base}}$.
Moreover, the above choice allows us to explicitly include the cost of intra-cluster coordination in our original objective and thus, optimize the transmission parameters accordingly.


\section{Self-Organizing Inter-Cluster Switching ON/OFF Mechanism}\label{sec:game_solution}

Given the formation of clusters, our next goal is to develop a self-organizing mechanism for solving (\ref{eqn:optimization_energy_efficiency_network}) in which each cluster of BSs adjusts its transmission parameters based on local information.
To do so, we use a regret-based learning approach~\cite{jnl:perlaza13,pap:bennis12}.

\subsection{Game Formulation}\label{subsec:game_formulation}

In the proposed approach, clusters need to autonomously select their transmission configurations to minimize their cost functions.
However, the cell coverage and the achievable throughput of BSs depend not only on the action of their own cluster, but also on the choices of neighboring clusters due to interference.
In this regard, we formulate a noncooperative game $\Set{G} = \big( \allclust, \{\Set{A}_{\Set{C}_i}\}_{\Set{C}_i\in\allclust}, \{{u}_{\Set{C}_i}\}_{\Set{C}_i\in\allclust} \big)$ in which the set of clusters $\allclust$ is the set of players.
Each player $\Set{C}_i\in\vect{\overline{\Set{C}}}$ has a set $\Set{A}_{\Set{C}_i}=\big\{ a_{\Set{C}_i,1},\ldots,a_{\Set{C}_i,|\Set{A}_{\Set{C}_i}|} \big\}$ of actions where an action of cluster $\Set{C}_i$, $a_{\Set{C}_i}$, is composed of the configurations of all its cluster members, i.e. $a_{\Set{C}_i}\triangleq(P_{b_1},I_{b_1},\ldots,P_{b_{|\Set{C}_i|}},I_{b_{|\Set{C}_i|}})$ with $\Set{C}_i=\{b_1,\ldots,b_{|\Set{C}_i|}\}$.
The utility function of cluster $\Set{C}_i$ is $u_{\Set{C}_i}(a_{\Set{C}_i},\vect{a}_{-\Set{C}}) = -\Upsilon_{\Set{C}_i}(\vect{P},\vect{I})$  with $u_{\Set{C}_i}:(\Set{A}_{\Set{C}_i},\Set{A}_{-\Set{C}_i})\mapsto\mathbb{R}$ where $a_{\Set{C}_i}$ is the action of cluster $\Set{C}_i$ and $\vect{a}_{-\Set{C}_i}\in\Set{A}_{-\Set{C}_i}$ are the actions of the other clusters.

Let $\vect{\pi}_{\Set{C}_i}(t) = \big( \pi_{\Set{C}_i,1}(t),\ldots,\pi_{\Set{C}_i,|\Set{A}_{\Set{C}_i}|}(t) \big)$ be a probability distribution using which cluster $\Set{C}_i$ selects a given action from $\Set{A}_{\Set{C}_i}$ at time instant $t$.
Here, $\pi_{\Set{C}_i,j}(t)=\mbox{Pr}\big(a_{\Set{C}_i}(t)=a_{\Set{C}_i,j}\big)$ is the cluster $\Set{C}_i$'s \emph{mixed strategy} where $a_{\Set{C}_i}(t)$ is the action of player $\Set{C}_i$ at time $t$.
When cluster $\Set{C}_i$ plays action $a_{\Set{C}_i}(t)$, it observes its utility feedback\footnote{
	Since the action played by the cluster $\Set{C}_i$ at time $t$ is already defined as $a_{\Set{C}_i}(t)$, we use ${u}_{\Set{C}_i}(t)$ instead of ${u}_{\Set{C}_i}\big(t;a_{\Set{C}_i}(t)\big)$ for the sake of notation simplicity.
	The system utility ${u}\big(t;\vect{a}(t)\big)=\sum_{\forall\Set{C}_i\in\allclust} {u}_{\Set{C}_i}\big(t;a_{\Set{C}_i}(t)\big)$ is simply written as ${u}(t)=\sum_{\forall\Set{C}_i\in\allclust} {u}_{\Set{C}_i}(t)$.
}  ${u}_{\Set{C}_i}\big(t;a_{\Set{C}_i}(t)\big)$ based on which it minimizes its regret associated to action $a_{\Set{C}_i}(t)$.
Subsequently, player $\Set{C}_i$ estimates its utility ${\vect{\hat{u}}}_{\Set{C}_i}(t)=\big( \hat{u}_{\Set{C}_i,1}(t),\ldots,\hat{u}_{\Set{C}_i,|\Set{A}_{\Set{C}_i}|}(t) \big)$ and regret ${\vect{\hat{r}}}_{\Set{C}_i}(t)=\big( \hat{r}_{\Set{C}_i,1}(t),\ldots,\hat{r}_{\Set{C}_i,|\Set{A}_{\Set{C}_i}|}(t) \big)$ for each action assuming it has played the same action during all previous time slots $\Seta{t-1}$.
At each time $t$, player $\Set{C}_i$ updates its mixed strategy probability distribution $\vect{\pi}_{\Set{C}_i}$ in which the actions with higher regrets are exploited while exploring the actions with low regrets. 
Such behavior is captured by the Boltzmann-Gibbs (BG) distribution $\vect{G}_{\Set{C}_i}=(G_{\Set{C}_i,1},\ldots,G_{\Set{C}_i,|\Set{A}_{\Set{C}_i}|})$ with $\vect{G}_{\Set{C}_i}:{\vect{\hat{r}}}\mapsto\Delta(\Set{A}_{\Set{C}_i})$ which is calculated as follows:
%
\begin{equation}\label{eqn:BG_distribution}
G_{\Set{C}_i,j}\big(\vect{\hat{r}}_{\Set{C}_i}(t)\big) = \frac{\exp\big(\kappa \hat{r}_{\Set{C}_i,j}^+(t)\big)} {\sum_{\forall j'\in\Set{A}_{\Set{C}_i}} \exp\big(\kappa \hat{r}_{\Set{C}_i,j'}^+(t)\big) }, \: j\in\Set{A}_{\Set{C}_i},
\end{equation}
where $\kappa>0$ is a temperature parameter which balances between exploration and exploitation.
At each time $t$, all the estimations for any player $\Set{C}_i\in\allclust$, $\vect{\hat{u}}_{\Set{C}_i}(t),~\vect{\hat{r}}_{\Set{C}_i}(t)$ and $\vect{\pi}_{\Set{C}_i}(t)$, are updated as follows;
\begin{equation}\label{eqn:algoUpdates}
\begin{cases}
{\hat{u}}_{\Set{C},i}(t) &= {\hat{u}}_{\Set{C},i}(t-1)  \\
&\hfill + \tau_{\Set{C}_i}(t)\mathds{1}_{\{a_{\Set{C},i}=v_{\Set{C}_i}(t-1)\}} \Bigl({u}_{\Set{C}_i}(t)-{\hat{u}}_{\Set{C},i}(t-1)\Bigr),\\
{\hat{r}}_{\Set{C},i}(t) &= {\hat{r}}_{\Set{C},i}(t-1)  \\
&\hfill +\iota_{\Set{C}_i}(t) \Bigl({\hat{u}}_{\Set{C},i}(t-1)-{u}_{\Set{C}_i}(t-1)-{\hat{r}}_{\Set{C},i}(t-1)\Bigr),\\
\pi_{\Set{C},i}(t) &= \pi_{\Set{C},i}(t-1) \\
&\hfill + \varepsilon_{\Set{C}_i}(t) \Bigl(G_{\Set{C},i}\big({\vect{\hat{r}}}_{\Set{C}_i}(t-1)\big)-\pi_{\Set{C},i}(t-1)\Bigr).
\end{cases}
\end{equation}
with the learning rates satisfying $\lim_{t\to\infty}\frac{\tau(t)}{\iota(t)}=\lim_{t\to\infty}\frac{\tau(t)}{\varepsilon(t)}=0$, and  $\lim_{t\to\infty}\sum_{n=1}^t\xi(n) = +\infty$ and $\lim_{t\to\infty}\sum_{n=1}^t\xi^2(n) < +\infty$ for all $\xi=\{\tau,\iota,\varepsilon\}$.
Our choice of the learning rates follows the format of ${1}/{t^\phi}$ with exponent $\phi\in(0.5,1)$.

\begin{figure*}[!t]
	\centering
	\includegraphics[width=\textwidth]{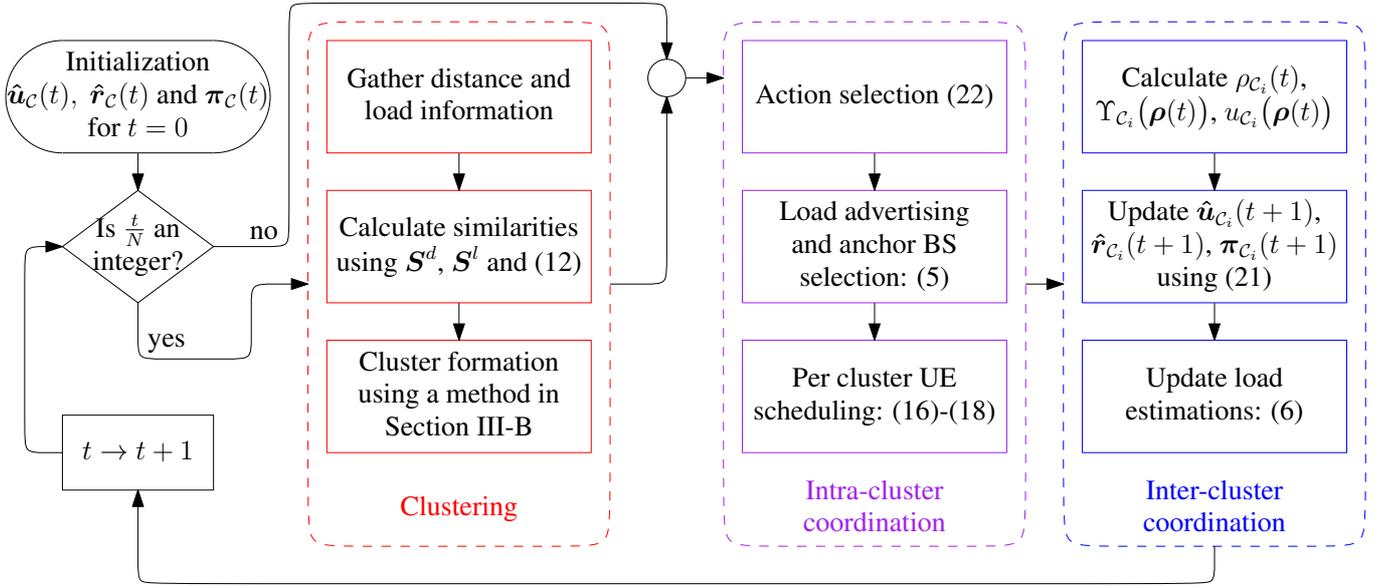}
	\caption{Flow diagram representation of the proposed method.}
	\label{fig:flow}
\end{figure*}

To solve the game, we propose a novel learning algorithm that proceeds as follows.
At the beginning of each time instant $t$, the clusters select their actions $(a_{\Set{C}_i})$ based on their mixed-strategy probabilities $(\pi_{\Set{C}_i})$, i.e. for all ${\Set{C}_i}\in\allclust$,
\begin{equation}\label{eqn:prob_to_action}
a_{\Set{C}_i}(t) = f\big( \pi_{\Set{C}_i}(t-1) \big),
\end{equation}
where $f:\pi_{\Set{C}_i}\mapsto a_{\Set{C},i}\in\Set{A}_{\Set{C}_i}$ is the mapping from probability distribution to an action.
Depending on the actions, all BSs advertise their loads $(\hat{\rho}_b)$ and UEs select their anchor BSs.
Based on the estimated loads and actions, UE scheduling within clusters takes place.
All the clusters of BSs carry out the transmission based on the actions $\vect{a}(t)=\big(a_1(t),\ldots,a_{|\overline{\Set{C}}|}(t)\big)$ and calculate the utilities $u_{\Set{C}_i}(t)$ for all ${\Set{C}}_i\in\allclust$.
Each cluster individually updates its utility and regret estimations $\big(\vect{\hat{u}}_{\Set{C}_i}(t),\vect{\hat{r}}_{\Set{C}_i}(t)\big)$ along with the mixed strategy probabilities $\big(\vect{\pi}_{\Set{C}_i}(t)\big)$ while all the BSs update their load estimations $\vect{\hat{\rho}}(t)$.
With the updated load estimations, the clusters are updated for each time interval $N$.
The entire operation of the proposed method is illustrated in Fig.~\ref{fig:flow} and the proposed algorithm is summarized in Algorithm \ref{alg:active_deactive}.

Next, we invoke the Gibbs-Markov equivalence to show that the system following (\ref{eqn:BG_distribution})-(\ref{eqn:prob_to_action}) is a \emph{Gibbs Field} with steady-state distribution.
\begin{theorem}\label{thm:convergence}
	Let $\vect{a}(t)=\big(\vectabc{a}{|\allclust|}{(t)}\big)\in\Set{A}$ be the collection of actions played by all the clusters based on their mixed strategy probabilities, where $\Set{A}=\prod_{\Set{C}_i\in\allclust}\Set{A}_{\Set{C}_i}$.
	Let $\vect{\pi}(t)=\big(\vectabc{\pi}{|\Set{A}|}{(t)}\big)$ with $\pi_j(t)=\mbox{Pr}\big(\vect{a}(t)=\vect{a}_{j}\big), \forall \vect{a}_j\in\Set A$, be the action selection probability of the system.
	Under Algorithm \ref{alg:active_deactive}, as $t\to +\infty$, $\vect{\pi}(t)$ converges to a \emph{stationary distribution} $\vect{\Pi}=(\Pi_{\vect{a}},\forall\vect{a}\in\Set{A})$ with,
	\begin{equation}\label{eqn:stationary_distribution}
	\Pi_{\vect a} = \frac{\exp(\kappa \hat{\Gamma}_{\vect a})}{\sum_{\forall\vect{a'}\in\Set{A}} \exp(\kappa \hat{\Gamma}_{\vect a'})},
	\end{equation}
	where $\hat{\Gamma}_{\vect a}$ is the ensemble average of the action $\vect a$'s regret estimation as $t\to +\infty$.
\end{theorem}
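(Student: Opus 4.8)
The plan is to read the coupled recursions in~(\ref{eqn:algoUpdates}) as a multi-timescale stochastic approximation scheme and to analyze it through the ODE method. The step-size conditions --- $\lim_{t\to\infty}\tau(t)/\iota(t)=\lim_{t\to\infty}\tau(t)/\varepsilon(t)=0$ together with $\sum_t\xi(t)=\infty$ and $\sum_t\xi^2(t)<\infty$ for $\xi\in\{\tau,\iota,\varepsilon\}$ --- separate the three recursions, so that the analysis can be carried out in a nested fashion: fixing the slower iterates, the faster one is driven to the globally asymptotically stable rest point of its limiting ODE, and that limit is then substituted into the slower recursion. Before invoking this machinery I would record that all iterates stay in compact sets: since $P_b\in[0,P_b^{\texth{Max}}]$ and $\rho_b\in[0,1]$, every per-cluster cost $\Upsilon_{\Set{C}_i}$ and hence every observed utility $u_{\Set{C}_i}(t)$ is uniformly bounded, and the convex-combination form of the $\hat{u}$--, $\hat{r}$-- and $\pi$--updates then keeps $\vect{\hat u}_{\Set{C}_i}(t)$, $\vect{\hat r}_{\Set{C}_i}(t)$ and $\vect{\pi}_{\Set{C}_i}(t)$ bounded for all $t$, with the innovation terms forming martingale differences of bounded conditional variance.

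The next step is to write down the limiting ODEs. For the utility estimate one gets, for each cluster $\Set{C}_i$ and action $j\in\Set{A}_{\Set{C}_i}$, $\dot{\hat u}_{\Set{C}_i,j}=\mathbb{E}\big[u_{\Set{C}_i}\mid a_{\Set{C}_i}=j\big]-\hat u_{\Set{C}_i,j}$, the expectation being over the randomizations of the other clusters; this linear ODE has the unique globally asymptotically stable equilibrium equal to the conditional expected utility of action $j$. Substituting it, the regret recursion yields $\dot{\hat r}_{\Set{C}_i,j}=\big(\hat u_{\Set{C}_i,j}-u_{\Set{C}_i}\big)-\hat r_{\Set{C}_i,j}$, again linear and stable, whose rest point is the ensemble-averaged regret of action $j$, which I denote $\hat\Gamma_{\Set{C}_i,j}$. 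Finally the strategy obeys $\dot{\vect\pi}_{\Set{C}_i}=\vect{G}_{\Set{C}_i}\big(\vect{\hat r}_{\Set{C}_i}\big)-\vect{\pi}_{\Set{C}_i}$ with $\vect{\hat r}_{\Set{C}_i}$ replaced by its limit $\vect{\hat\Gamma}_{\Set{C}_i}$; because the Boltzmann--Gibbs map~(\ref{eqn:BG_distribution}) is Lipschitz and takes values in the interior of $\Delta(\Set{A}_{\Set{C}_i})$ whenever $\kappa<\infty$ and the regrets are bounded, this is a stable relaxation ODE with the unique equilibrium $\vect{G}_{\Set{C}_i}\big(\vect{\hat\Gamma}_{\Set{C}_i}\big)$. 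Applying the standard convergence theorem for multi-timescale stochastic approximation (e.g.\ Borkar's result) then gives that $\big(\vect{\hat u}_{\Set{C}_i}(t),\vect{\hat r}_{\Set{C}_i}(t),\vect{\pi}_{\Set{C}_i}(t)\big)$ converges almost surely, for every cluster, to $\big(\vect{\hat u}^\star_{\Set{C}_i},\vect{\hat\Gamma}_{\Set{C}_i},\vect{G}_{\Set{C}_i}(\vect{\hat\Gamma}_{\Set{C}_i})\big)$.

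It remains to pass from the per-cluster limits to the joint action law $\vect\Pi$. At every $t$ each cluster draws its action independently through the map $f$ in~(\ref{eqn:prob_to_action}) from its own $\vect{\pi}_{\Set{C}_i}$, so in the limit $\Pi_{\vect a}=\prod_{\Set{C}_i\in\allclust} G_{\Set{C}_i,a_{\Set{C}_i}}\big(\vect{\hat\Gamma}_{\Set{C}_i}\big)$. Expanding each factor with its exponential numerator and normalizer from~(\ref{eqn:BG_distribution}) and using distributivity of the product over the per-cluster normalizers, the product of the local Boltzmann--Gibbs laws collapses to $\exp\!\big(\kappa\sum_{\Set{C}_i}\hat\Gamma_{\Set{C}_i,a_{\Set{C}_i}}\big)$ divided by $\sum_{\vect a'\in\Set{A}}\exp\!\big(\kappa\sum_{\Set{C}_i}\hat\Gamma_{\Set{C}_i,a'_{\Set{C}_i}}\big)$. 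Setting $\hat\Gamma_{\vect a}=\sum_{\Set{C}_i\in\allclust}\hat\Gamma_{\Set{C}_i,a_{\Set{C}_i}}$, i.e.\ the ensemble average of the joint action $\vect a$'s regret, recovers exactly~(\ref{eqn:stationary_distribution}); this is the Gibbs--Markov equivalence step --- the product of the nodes' local Gibbs samplers is a single global Gibbs field whose potential decomposes additively over the clusters.

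The main obstacle is the stochastic-approximation core rather than the final algebra: one has to show that the iterates remain in the interior of the relevant sets and that each limiting ODE possesses a \emph{globally} (not merely locally) asymptotically stable equilibrium, so that the multi-timescale theorem actually delivers almost-sure convergence rather than mere boundedness. The delicate points are that the Boltzmann--Gibbs map keeps $\vect{\pi}_{\Set{C}_i}$ uniformly bounded away from the boundary of the simplex (so the effective learning rate of the $\hat u$-update stays non-degenerate and the slow ODE is well posed), and that the timescale separation genuinely decouples the recursions even though the inter-cluster interference enters each ODE's vector field through the expectations over neighbors' strategies. The remaining ingredients --- boundedness of $\Upsilon_{\Set{C}_i}$, the martingale-difference property of the innovations, and the product-to-Gibbs manipulation --- are routine.
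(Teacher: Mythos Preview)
Your argument is sound in outline and reaches the same formula, but it follows a genuinely different route from the paper. You treat the coupled recursions~(\ref{eqn:algoUpdates}) as a three-timescale stochastic approximation and invoke the ODE method (Borkar-type convergence) to show that $(\vect{\hat u}_{\Set{C}_i},\vect{\hat r}_{\Set{C}_i},\vect{\pi}_{\Set{C}_i})$ converges almost surely to the rest points of the nested limiting ODEs; only after this do you carry out the product-to-Gibbs manipulation. The paper, by contrast, does not run any stochastic-approximation argument: it simply \emph{postulates} that the ensemble averages $\vect{\hat U}_{\Set{C}_i},\vect{\hat\Gamma}_{\Set{C}_i},\vect{\Pi}_{\Set{C}_i}$ exist as $t\to\infty$, performs the same product-of-Boltzmann factors computation you give, and then recasts the whole system as a Gibbs random field with cliques $\allclust$, configuration space $\Set{A}$, and potentials $\hat\Gamma_{\Set{C}_i}$. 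Invoking the Gibbs--Markov equivalence (Br\'emaud), it identifies the evolution of joint actions with a time-inhomogeneous Markov chain on $\Set{A}$, checks aperiodicity (self-loops with positive probability), irreducibility and positive recurrence (all entries of $\vect\Pi(t)$ strictly positive since the potentials are finite), and concludes that the transition law converges to a stationary distribution, which is then identified with~(\ref{eqn:stationary_distribution}).

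What each approach buys: your ODE route actually \emph{proves} the step the paper takes for granted---namely that the recursions converge to their ensemble averages---and makes explicit where the step-size separation and the strict positivity of the Boltzmann--Gibbs map are used; once the per-cluster limits are established, stationarity of $\vect\Pi$ is automatic and no Markov-chain machinery is needed. The paper's Gibbs-field argument, on the other hand, gives a structural interpretation of the limiting law (the joint action process is a Gibbs sampler whose potential decomposes over clusters) and relies on classical ergodic results rather than stochastic approximation, at the cost of leaving the convergence of the estimates themselves unjustified. Your closing caveat about global asymptotic stability of the coupled slow ODE is well placed; the paper sidesteps this issue entirely by asserting the ensemble limits up front.
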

\begin{IEEEproof}
	Let $U_{\Set{C}_i}$ and $U=\sum_{\forall\Set{C}_i\in\vect{\overline{\Set{C}}}}U_{\Set{C}_i}$ be the ensemble averages of the utilities of cluster $\Set{C}_i$ and the system, respectively.
	As $t\to\infty$, for the cluster $\Set{C}_i\in\overline{\vect{\Set{C}}}$, the ensemble averages of utility estimation, regret estimation and mixed-strategy probability become $\vect{\hat{U}}_{\Set{C}_i}, \vect{\hat{\Gamma}}_{\Set{C}_i}=\vect{\hat{U}}_{\Set{C}_i}-U_{\Set{C}_i}\mathbf{1}$, and $\vect{\Pi}_{\Set{C}_i,j}=\vect{G}_{\Set{C}_i,j}( \vect{\hat{\Gamma}}_{\Set{C}_i})$ for $j\in\Set{A}_{\Set{C}_i}$ as per (\ref{eqn:algoUpdates}), respectively.
	Since $\sum_{\forall\Set{C}_i\in\overline{\vect{\Set{C}}}} {\hat{\Gamma}}_{\Set{C}_i,{a}_{\Set{C}_i}} = \sum_{\forall\Set{C}_i\in\overline{\vect{\Set{C}}}} ( \hat{U}_{\Set{C}_i, a_{\Set{C}_i}}-U_{\Set{C}_i} ) = \hat{U}_{\vect{a}}-U = \hat{\Gamma}_{\vect{a}}$ for any action $\vect{a}\in\Set{A}$, $\vect{\hat{\Gamma}} = (\hat{\Gamma}_{\vect a}, \forall\vect{a}\in\Set{A})$ is the ensemble average of the regret estimations for the entire system.
	Due to the fact that an action $\vect a$ of the system is composed by the set of actions ($a_{\Set{C}_i,j_i},\forall \Set{C}_i\in\allclust$) per each cluster, the probability of selecting $\vect a$ is $\prod_{\forall\Set{C}_i\in\allclust}\vect{\Pi}_{\Set{C}_i,j_i}= \prod_{\forall\Set{C}_i\in\allclust}\vect{G}_{\Set{C}_i,j_i}( \vect{\hat{\Gamma}}_{\Set{C}_i})$ which can be simplified to $\Pi_{\vect a}$ in (\ref{eqn:stationary_distribution}).
	
	Following the notation in \cite[Chapter 7]{book:bremaud91}, we define a Gibbs field with a set of cliques $\allclust$, configuration space of $\Set{V}=\prod_{\forall \Set{C}_i\in\allclust}\Set{V}_{\Set{C}_i}$ with a finite size, set of finite potentials $\{\hat{\vect{\Gamma}}_{\Set{C}_i}(v_i),\forall v_i\in\Set{V}_{\Set{C}_i}\}_{\forall \Set{C}_i\in\allclust}$, and probability distribution $\vect{\Pi}=(\Pi_{v},\forall v\in\Set{V})$ where $\Pi_{v} = \frac{\exp\big(\kappa \hat{\Gamma}({v})\big)}{\sum_{\forall v'\in\Set{V}} \exp\big(\kappa \hat{\Gamma}({v'})\big)}$ and $\hat{\Gamma}({v})=\sum_{\forall\Set{C}_i\in\allclust}\hat{\vect{\Gamma}}_{\Set{C}_i}(v_i)$ with $v=(\vectab{v}{|\allclust|})$.
	Suppose the configuration $v(t)$ at time $t$ changes to the configuration $v(t+1)$ at time $t+1$ according to $\vect{\Pi}(t)$.
	Along with the properties we have introduced, the evolution of the configurations in the above Gibbs field is equivalent to a Markov chain~\cite[Chapter 7, Theorem 2.1]{book:bremaud91}.
	Since the configuration space $\Set{V}$ is finite at any given time $t$, the evolution of configurations follows a time-inhomogeneous Markov chain.
	Note that $\vect{\Pi}(t)\succ \mathbf{0}$ due to the fact that the potentials are finite.
	Henceforth, all configurations have self-loops with positive probability and thus, the Markov chain is aperiodic.
	Furthermore, $\vect{\Pi}(t)$ being a positive vector implies that $\mbox{Pr}\big(v(t+1)=v''|v(t)=v'\big)>0$ for any $v',v''\in\Set V$.
	Thus, it verifies that the process can start from configuration $v'$ and ends in configuration $v''$ with a positive probability in which the irreducible and positive recurrence properties are held.
	Since the time-inhomogeneous Markov chain is aperiodic, irreducible and positive recurrent, as $t\to\infty$ its transition probability $\vect{\Pi}$ converges to a stationary distribution~\cite[Chapter 3, Theorem 3.1]{book:bremaud91}.
	
	It is shown that the transition probability of the above Gibbs field is equivalent to the action selection probability provided by Algorithm \ref{alg:active_deactive}.
	Since the transition probability converges to a stationary distribution, we can claim that $\vect{\Pi}$ given in (\ref{eqn:stationary_distribution}) becomes a stationary distribution as well.
\end{IEEEproof}

\begin{algorithm}[!t]
	\caption{Dynamic Clustering and BS Switching ON/OFF}
	\label{alg:active_deactive}
	\begin{algorithmic}[1]                    
		\STATE {\bf Input:} $\vect{\hat{u}}_{\Set{C}}(t),~\vect{\hat{r}}_{\Set{C}}(t)$ and $\vect{\pi}_{\Set{C}}(t)$ for $t=0$ and $\forall {\Set{C}}\in\allclust$, and $\vect{\hat{\rho}}(t)$ \tikzmark{anchor}
		\WHILE{ true }
		\STATE $t\rightarrow t+1$
		\STATExx\subgroup{3}{\emph{{Intra-cluster operations}:}}
		\STATE Action selection: $a_{\Set{C}_i}(t) = f\Big( \vect{\pi}_{\Set{C}_i}(t-1) \Big)$,~(\ref{eqn:prob_to_action})
		\STATE Load advertising $\vect{\hat{\rho}}(t)$ and anchor BS selection: (\ref{eqn:ue_association})
		\STATE Per cluster UE scheduling: (\ref{eqn:scheduling_MI_opt})-(\ref{eqn:schedule_solution})
		\STATExx\subgroup{3}{\emph{{Inter-cluster operations}:}}
		\STATE Calculations: $ \rho_{\Set{C}_i}(t),~\Upsilon_{\Set{C}_i}\big(\vect{\rho}(t)\big),~u_{\Set{C}_i}\big(\vect{\rho}(t)\big) $
		\STATE Update utility and regret estimations, and probability:
		\STATE \hspace{10pt} $\vect{\hat{u}}_{\Set{C}_i}(t+1),~\vect{\hat{r}}_{\Set{C}_i}(t+1),~\vect{\pi}_{\Set{C}_i}(t+1)$,~(\ref{eqn:algoUpdates})
		\STATE Update load estimations: (\ref{eqn:load_estimation})
		\IF{ $\frac{t}{N}\in\mathbb{Z}_+$}
		\STATE Update clusters $\allclust$.
		\ENDIF
		\ENDWHILE
	\end{algorithmic}
	
\end{algorithm}

Note that the convergence to a stationary distribution is affected by the dynamically changing clusters.
Dynamic changes in clustering are based on load estimation (locations are fixed) and mixed strategy probabilities depend on the utility and regret estimations.
Therefore, minimizing the fluctuations in estimations helps to achieve the stationary distribution.
This is achieved by
\emph{i)} defining the cost as a separable function,
\emph{ii)} avoiding fully connected graph with properly selected $\varepsilon_d, \sigma_d$ and $\sigma_l$ ensures physically-separated clusters and thus, prevents radical changes to the mixed strategy probabilities, and
\emph{iii)} introducing time-scale separation for clustering and power optimization.
The separable cost function allows the cost calculation to be unchanged through the entire network.
Further, we restrict clusters to remain unchanged for a time interval of $N(\gg 1)$ allowing (\ref{eqn:algoUpdates}) to yield stable estimations.
The changes in clusters diminish as the load estimations become stable.
Henceforth, the system yields a stationary distribution and form stable clusters.

\subsection{Optimality of The Solution}\label{subsec:optimality}

Theorem \ref{thm:convergence} shows that for a fixed $\kappa$, the mixed strategy probabilities converge to the stationary distribution $\vect{\Pi}$.
In this section we analyze the effect of BG temperature parameter $\kappa$ on the optimality of the solution at the convergence of Algorithm~\ref{alg:active_deactive}.
Thus, we use $\vect{\Pi}^{(\kappa)}$ in order to denote the dependency of $\kappa$ on $\vect{\Pi}$.
First, we show that as $\kappa$ approaches infinity, the solution of Algorithm~\ref{alg:active_deactive} converges to the solution of (\ref{eqn:optimization_energy_efficiency_network}).

\begin{theorem}\label{thm:kappa_infinity}
	When $\kappa\to\infty$, $\Pi^{(\kappa)}_{\vect a}$ becomes
	\begin{equation}\label{eqn:kappa_infinity}
	\lim_{\kappa\to\infty} \Pi^{(\kappa)}_{\vect a} = \Pi^{(\infty)}_{\vect a}=
	\begin{cases}
	\frac{1}{|\Set{A}^\star|} & \mbox{if}~\vect{a}\in\Set{A}^\star, \\
	0 & \mbox{if}~\vect{a}\notin\Set{A}^\star,
	\end{cases}
	\end{equation}
	where $\Set{A}^\star$ is the set of global optimal solutions to (\ref{eqn:optimization_energy_efficiency_network}).
\end{theorem}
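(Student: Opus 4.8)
The plan is to feed the closed form of the stationary distribution from Theorem~\ref{thm:convergence} into a zero-temperature (Laplace-type) estimate, after identifying the maximizers of the limiting regret vector $\vect{\hat\Gamma}$ with the global optima $\Set{A}^\star$. First I would isolate the maximal entries of $\vect{\hat\Gamma}$: by Theorem~\ref{thm:convergence}, $\Pi^{(\kappa)}_{\vect a}=\exp(\kappa\hat{\Gamma}_{\vect a})/\sum_{\forall\vect{a'}\in\Set{A}}\exp(\kappa\hat{\Gamma}_{\vect{a'}})$; writing $\hat{\Gamma}^{\mathrm{max}}=\max_{\forall\vect{a'}\in\Set{A}}\hat{\Gamma}_{\vect{a'}}$ and $\hat{\Set{A}}=\{\vect{a}\in\Set{A}:\hat{\Gamma}_{\vect a}=\hat{\Gamma}^{\mathrm{max}}\}$ and dividing numerator and denominator by $\exp(\kappa\hat{\Gamma}^{\mathrm{max}})$ gives $\Pi^{(\kappa)}_{\vect a}=\exp(\kappa(\hat{\Gamma}_{\vect a}-\hat{\Gamma}^{\mathrm{max}}))/\big(|\hat{\Set{A}}|+\sum_{\forall\vect{a'}\notin\hat{\Set{A}}}\exp(\kappa(\hat{\Gamma}_{\vect{a'}}-\hat{\Gamma}^{\mathrm{max}}))\big)$. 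Since $\Set{A}$ is finite and every exponent in the trailing sum is strictly negative, the denominator tends to $|\hat{\Set{A}}|$ as $\kappa\to\infty$ while the numerator tends to $1$ for $\vect{a}\in\hat{\Set{A}}$ and to $0$ otherwise, so $\lim_{\kappa\to\infty}\Pi^{(\kappa)}_{\vect a}=1/|\hat{\Set{A}}|$ for $\vect a\in\hat{\Set{A}}$ and $0$ for $\vect a\notin\hat{\Set{A}}$.

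The remaining --- and decisive --- step is to prove $\hat{\Set{A}}=\Set{A}^\star$. From the derivation inside the proof of Theorem~\ref{thm:convergence} we have $\hat{\Gamma}_{\vect a}=\hat{U}_{\vect a}-U$ with $U=\sum_{\forall\Set{C}_i\in\allclust}U_{\Set{C}_i}$ independent of $\vect a$, hence $\argmax_{\vect a}\hat{\Gamma}_{\vect a}=\argmax_{\vect a}\hat{U}_{\vect a}$. I would then show that in the limit the joint utility estimate $\hat{U}_{\vect a}$ equals, up to an additive constant, the true network utility $u(\vect a)=-\sum_{\forall\Set{C}_i\in\allclust}\Upsilon_{\Set{C}_i}(\vect{P},\vect{I})$: by the per-action update in (\ref{eqn:algoUpdates}) together with the time-scale separation $\lim_{t\to\infty}\tau(t)/\iota(t)=\lim_{t\to\infty}\tau(t)/\varepsilon(t)=0$, each $\hat{u}_{\Set{C}_i,j}$ is an unbiased running average of the realized cluster utility $u_{\Set{C}_i}$ over the slots in which action $j$ is effectively played; as the limiting play concentrates on $\hat{\Set{A}}$ (forced by the uniform limit obtained above), each cluster essentially faces a deterministic opponent profile, so $\hat{U}_{\Set{C}_i,a_{\Set{C}_i}}\to u_{\Set{C}_i}(\vect a)$ and, summing over clusters, $\hat{U}_{\vect a}\to u(\vect a)$ on the relevant support. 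Consequently $\hat{\Set{A}}=\argmax_{\vect a}\hat{U}_{\vect a}=\argmin_{\vect a}\sum_{\forall\Set{C}_i\in\allclust}\Upsilon_{\Set{C}_i}(\vect{P},\vect{I})=\Set{A}^\star$ for the given clustering, and substituting $\hat{\Set{A}}=\Set{A}^\star$ into the limit from the first step yields (\ref{eqn:kappa_infinity}).

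I expect the main obstacle to be precisely the identification $\hat{\Set{A}}=\Set{A}^\star$, which is genuinely self-consistent: the regret vector $\vect{\hat\Gamma}=\vect{\hat\Gamma}^{(\kappa)}$ itself depends on $\kappa$ through the stationary mixed strategies, so one must argue that its maximizer set stabilizes to $\Set{A}^\star$ for all $\kappa$ large enough, and that the per-cluster regret ensemble averages --- a priori expectations of $u_{\Set{C}_i}$ over the opponents' \emph{mixed} strategies rather than over a pure profile --- do aggregate to the separable cost $\sum_{\forall b\in\Set{B}}\Upsilon_b$, i.e.\ that the clusters' regrets are aligned with the global objective. The separability of the cost and the boundedness of the potentials, both already exploited in Theorem~\ref{thm:convergence}, are the structural ingredients that make this go through, with a little extra care needed when $|\Set{A}^\star|>1$, since the limiting play is then a uniform mixture over $\Set{A}^\star$ rather than a single profile (all of whose elements, however, attain the common optimal value, which keeps $\hat{U}_{\vect a}$ consistent on $\Set{A}^\star$). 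The first step, by contrast, is a routine Laplace-type estimate requiring only that $\Set{A}$ and the potentials are finite.
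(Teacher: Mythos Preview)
Your first step --- dividing numerator and denominator by $\exp(\kappa\hat{\Gamma}^{\mathrm{max}})$ and letting the strictly negative exponents vanish --- is exactly the computation the paper carries out, only stated a bit more carefully. The paper's entire proof is essentially your first paragraph: it fixes $\vect a\in\Set{A}^\star$, asserts in one line that $\hat{U}_{\vect a}>\hat{U}_{\vect a'}\implies\hat{\Gamma}_{\vect a}>\hat{\Gamma}_{\vect a'}$ for all $\vect a'\notin\Set{A}^\star$, and then performs the Laplace limit.

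Where you differ is that you take the identification $\hat{\Set{A}}=\Set{A}^\star$ seriously and devote two paragraphs to it, whereas the paper simply \emph{assumes} it: the sentence ``Let $\vect a\in\Set{A}^\star$. Therefore, $\hat{U}_{\vect a}>\hat{U}_{\vect a'}$'' is the entire justification offered. Your worry that $\vect{\hat\Gamma}$ depends on $\kappa$ through the stationary play, and that the per-cluster ensemble averages must be shown to align with the separable global cost, is well-founded and goes strictly beyond what the paper proves. In other words, the paper treats ``$\Set{A}^\star$ is the maximizer set of $\vect{\hat\Gamma}$'' as a definitional premise rather than as something to be derived; your self-consistency argument is additional rigor the paper does not supply. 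For the purpose of matching the paper, your first paragraph alone suffices; the rest is a (legitimate) attempt to close a gap the paper leaves open.
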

\begin{IEEEproof}
	Let $\vect a\in\Set{A}^\star$.
	Therefore, $\hat{U}_{\vect a}>\hat{U}_{\vect a'} \implies \hat{\Gamma}_{\vect a}>\hat{\Gamma}_{\vect a'}$ for all $\vect a'\notin\Set{A}^\star$.
	Thus,
	\begin{align*}
	\Pi_{\vect a}^{(\infty)} &= \lim_{\kappa\to\infty} \frac{\exp(\kappa \hat{\Gamma}_{\vect a})}{\sum_{\forall\vect{a'}\in\Set{A}} \exp(\kappa \hat{\Gamma}_{\vect a'})} \\
	&= \lim_{\kappa\to\infty} \frac{1}{|\Set{A}^\star| + \sum_{\forall\vect{a'}\notin\Set{A}^\star} \exp\big(\kappa (\hat{\Gamma}_{\vect a'}-\hat{\Gamma}_{\vect a}) \big)} 
	= \frac{1}{|\Set{A}^\star|}.
	\end{align*}
\end{IEEEproof}

Theorem \ref{thm:kappa_infinity} states that for any finite $\kappa$, since $\Pi_{\vect a'}\geq 0$ for all $\vect{a'}\notin\Set{A}^\star$, a non-optimal action can be selected with non-zero probability.
Therefore, for finite $\kappa$, we cannot ensure that Algorithm~\ref{alg:active_deactive} leads to pick optimal actions under the stationary distribution.
However, the optimality of the solution increases with $\kappa$ as given in the following theorem.

\begin{theorem}\label{thm:kappa_monotonic}
	For any optimal action $\vect a\in\Set{A}^\star$, the probability of selecting the optimal solution with $\Pi_{\vect a}^{(\kappa)}$, monotonically increases with $\kappa$.
\end{theorem}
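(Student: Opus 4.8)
The plan is to treat $\Pi_{\vect a}^{(\kappa)}$ as a function of the continuous variable $\kappa\in(0,\infty)$ and show $\frac{d}{d\kappa}\Pi_{\vect a}^{(\kappa)}>0$ for any fixed $\vect a\in\Set A^\star$. Write $\Pi_{\vect a}^{(\kappa)} = \exp(\kappa\hat\Gamma_{\vect a}) / Z(\kappa)$ with partition function $Z(\kappa)=\sum_{\vect a'\in\Set A}\exp(\kappa\hat\Gamma_{\vect a'})$. Differentiating the log, $\frac{d}{d\kappa}\log\Pi_{\vect a}^{(\kappa)} = \hat\Gamma_{\vect a} - \frac{Z'(\kappa)}{Z(\kappa)} = \hat\Gamma_{\vect a} - \sum_{\vect a'\in\Set A}\Pi_{\vect a'}^{(\kappa)}\hat\Gamma_{\vect a'} = \hat\Gamma_{\vect a} - \mathbb{E}_{\vect\Pi^{(\kappa)}}[\hat\Gamma]$. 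Since $\Pi_{\vect a}^{(\kappa)}>0$, the sign of $\frac{d}{d\kappa}\Pi_{\vect a}^{(\kappa)}$ equals the sign of $\hat\Gamma_{\vect a} - \mathbb{E}_{\vect\Pi^{(\kappa)}}[\hat\Gamma]$.

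Next I would argue that $\hat\Gamma_{\vect a} - \mathbb{E}_{\vect\Pi^{(\kappa)}}[\hat\Gamma]>0$ for every optimal $\vect a$. The key fact, inherited from the proof of Theorem~\ref{thm:kappa_infinity}, is that $\vect a\in\Set A^\star$ maximizes $\hat\Gamma$ over $\Set A$ — indeed the argument there uses $\hat U_{\vect a}>\hat U_{\vect a'}\implies\hat\Gamma_{\vect a}>\hat\Gamma_{\vect a'}$ for $\vect a'\notin\Set A^\star$, and all optimal actions share the common maximal value $\hat\Gamma^\star := \hat\Gamma_{\vect a}$. Hence $\mathbb{E}_{\vect\Pi^{(\kappa)}}[\hat\Gamma] = \sum_{\vect a'}\Pi_{\vect a'}^{(\kappa)}\hat\Gamma_{\vect a'} \le \hat\Gamma^\star$, with equality only if $\vect\Pi^{(\kappa)}$ puts all its mass on $\Set A^\star$. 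But for finite $\kappa$ every action has $\Pi_{\vect a'}^{(\kappa)}>0$, so if there exists any suboptimal action (i.e. $\Set A^\star\ne\Set A$) the inequality is strict, giving $\hat\Gamma_{\vect a}-\mathbb{E}_{\vect\Pi^{(\kappa)}}[\hat\Gamma]>0$ and therefore $\frac{d}{d\kappa}\Pi_{\vect a}^{(\kappa)}>0$. In the degenerate case $\Set A^\star=\Set A$ all actions are optimal and $\Pi_{\vect a}^{(\kappa)}\equiv 1/|\Set A|$ is constant, which is consistent with the (non-strict) monotonicity claim.

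I would then conclude by integrating: since the derivative is nonnegative (strictly positive whenever a suboptimal action exists) on all of $(0,\infty)$, the map $\kappa\mapsto\Pi_{\vect a}^{(\kappa)}$ is monotonically increasing, which is the assertion of the theorem; combined with Theorem~\ref{thm:kappa_infinity} it approaches the limit $1/|\Set A^\star|$ from below. I do not expect a serious obstacle here — the computation is the standard fact that increasing inverse temperature sharpens a Gibbs distribution toward its mode. The one point requiring mild care is making precise that every $\vect a\in\Set A^\star$ attains the same (maximal) value $\hat\Gamma^\star$, so that the optimal action is genuinely a maximizer of $\hat\Gamma$ and not merely larger than strictly suboptimal ones; this follows because $\Set A^\star$ is the set of global optima of \eqref{eqn:optimization_energy_efficiency_network}, so all its elements share a common objective value, hence a common $\hat U$ and a common $\hat\Gamma$ at convergence.
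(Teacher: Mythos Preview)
Your argument is correct and essentially identical to the paper's: both differentiate $\Pi_{\vect a}^{(\kappa)}$ (you via $\log\Pi_{\vect a}^{(\kappa)}$, the paper directly) to obtain the factor $\hat\Gamma_{\vect a}-\mathbb{E}_{\vect\Pi^{(\kappa)}}[\hat\Gamma]$ and then use that an optimal action maximizes $\hat\Gamma$ so this factor is positive. If anything, your treatment of the degenerate case $\Set A^\star=\Set A$ and the strict-versus-nonstrict inequality is more careful than the paper's, which simply asserts strict positivity.
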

\begin{IEEEproof}
	Let us study the first derivative of (\ref{eqn:stationary_distribution}):
	\begin{align*}
	\frac{\partial\Pi_{\vect a}^{(\kappa)}}{\partial\kappa} &= \frac{\partial}{\partial\kappa} \frac{\exp(\kappa \hat{\Gamma}_{\vect a})}{\sum_{\forall\vect{a'}\in\Set{A}} \exp(\kappa \hat{\Gamma}_{\vect a'})} = \Pi_{\vect a} \Big( \hat{\Gamma}_{\vect a} - \mathbb{E}_{\,\vect\Pi^{(\kappa)}} \big[\hat{\Gamma}_{\vect a}\big] \Big),
	\end{align*}
	where $\vect{\hat{\Gamma}} = (\hat{\Gamma}_{\vect a}, \forall\vect{a}\in\Set{A})$ is the regret estimations for the system.
	Since $\hat{\Gamma}_{\vect a}\geq\hat{\Gamma}_{\vect a'}$ for any $\vect a\in\Set{A}^\star$ and for all $\vect a'\in\Set{A}$, $ \hat{\Gamma}_{\vect a} > \mathbb{E}_{\,\vect\Pi^{(\kappa)}} \big[\hat{\Gamma}_{\vect a}\big]$ is true and thus, $\frac{\partial\Pi_{\vect a}^{(\kappa)}}{\partial\kappa}>0$ is held, i.e. the probability of choosing the optimal action monotonically increases with $\kappa$.
\end{IEEEproof}

From Theorem \ref{thm:kappa_infinity} and Theorem \ref{thm:kappa_monotonic}, we can observe that the choice of $\kappa$ affects the optimality as the solution of Algorithm~\ref{alg:active_deactive} converges to the stationary distribution.
Finally, we prove that selecting large $\kappa$ not only increases probability of choosing the optimal action, but also increases the expected utility of the system as given in the theorem below.

\begin{figure*}
	\begin{align}
	\nonumber \text{LHS}
	\nonumber  &= \textstyle\sumsum\limits_{\forall\alpha\in\Set{A}_{\Set{C}},\forall\beta\in\Set{A}_{-\Set{C}}} U_{\Set{C},(\alpha',\beta)} \Pi_{(\alpha,\beta)}^{(\infty)} - \sumsum\limits_{\forall\alpha\in\Set{A}_{\Set{C}},\forall\beta\in\Set{A}_{-\Set{C}}} U_{\Set{C},(\alpha,\beta)} \Pi_{(\alpha,\beta)}^{(\infty)} \\
	\nonumber  &\geq \textstyle\sumsum\limits_{\forall\alpha\in\Set{A}_{\Set{C}},\forall\beta\in\Set{A}_{-\Set{C}}} U_{\Set{C},(\alpha',\beta)} \Pi_{(\alpha,\beta)}^{(\kappa)} - \sumsum\limits_{\forall\alpha\in\Set{A}_{\Set{C}},\forall\beta\in\Set{A}_{-\Set{C}}} U_{\Set{C},(\alpha,\beta)} \Pi_{(\alpha,\beta)}^{(\infty)} \\
	\label{eqnlong}  &\geq \textstyle\sumsum\limits_{\forall\alpha\in\Set{A}_{\Set{C}},\forall\beta\in\Set{A}_{-\Set{C}}} U_{\Set{C},(\alpha',\beta)} \Pi_{(\alpha,\beta)}^{(\kappa)} - \sumsum\limits_{\forall\alpha\in\Set{A}_{\Set{C}},\forall\beta\in\Set{A}_{-\Set{C}}} U_{\Set{C},(\alpha,\beta)} \Pi_{(\alpha,\beta)}^{(\kappa)} - \epsilon.
	\end{align}
	\rule{\textwidth}{.5pt}
\end{figure*}

\begin{theorem}\label{thm:kappa_expected_utility}
	Once the solution of Algorithm~\ref{alg:active_deactive} converges to $\vect\Pi^{(\kappa)}$, the expected value of the system utility $\mathbb{E}_{\,\vect{\Pi}^{(\kappa)}} \big[ \hat{U}_{\vect a} \big]$ monotonically increases with $\kappa$.
\end{theorem}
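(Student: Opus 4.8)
The plan is to reduce the statement to the elementary fact that the log-partition function of a Gibbs distribution is convex in its inverse-temperature parameter, so that its second derivative, which equals a variance, is nonnegative. First I would rewrite the expected utility in terms of the regret estimations used in Theorem~\ref{thm:convergence}: there the ensemble-average system regret satisfies $\hat{\Gamma}_{\vect a} = \hat{U}_{\vect a} - U$ for every $\vect a\in\Set A$, where $U = \sum_{\Set{C}_i\in\allclust} U_{\Set{C}_i}$ does not depend on $\vect a$. Hence $\mathbb{E}_{\,\vect{\Pi}^{(\kappa)}}\big[\hat{U}_{\vect a}\big] = \mathbb{E}_{\,\vect{\Pi}^{(\kappa)}}\big[\hat{\Gamma}_{\vect a}\big] + U$, and since $U$ is a constant (independent of $\kappa$, following the same convention used in the proofs of Theorems~\ref{thm:kappa_infinity} and~\ref{thm:kappa_monotonic}, which treat the limiting ensemble-average regrets as fixed potentials of the Gibbs field), it suffices to show that $\mathbb{E}_{\,\vect{\Pi}^{(\kappa)}}\big[\hat{\Gamma}_{\vect a}\big]$ is nondecreasing in $\kappa$.

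Next I would differentiate $\mathbb{E}_{\,\vect{\Pi}^{(\kappa)}}\big[\hat{\Gamma}_{\vect a}\big] = \sum_{\vect a\in\Set A}\hat{\Gamma}_{\vect a}\,\Pi_{\vect a}^{(\kappa)}$ with respect to $\kappa$, using the derivative of the stationary distribution already obtained in the proof of Theorem~\ref{thm:kappa_monotonic}, namely $\tfrac{\partial \Pi_{\vect a}^{(\kappa)}}{\partial \kappa} = \Pi_{\vect a}^{(\kappa)}\big(\hat{\Gamma}_{\vect a} - \mathbb{E}_{\,\vect{\Pi}^{(\kappa)}}[\hat{\Gamma}_{\vect a}]\big)$. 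This gives
\[
\frac{\partial}{\partial\kappa}\,\mathbb{E}_{\,\vect{\Pi}^{(\kappa)}}\big[\hat{\Gamma}_{\vect a}\big]
= \mathbb{E}_{\,\vect{\Pi}^{(\kappa)}}\big[\hat{\Gamma}_{\vect a}^2\big] - \big(\mathbb{E}_{\,\vect{\Pi}^{(\kappa)}}[\hat{\Gamma}_{\vect a}]\big)^2
= \mathrm{Var}_{\,\vect{\Pi}^{(\kappa)}}\big[\hat{\Gamma}_{\vect a}\big] \ge 0 ,
\]
equivalently one recognizes $\mathbb{E}_{\,\vect{\Pi}^{(\kappa)}}[\hat{\Gamma}_{\vect a}] = \tfrac{d}{d\kappa}\log Z(\kappa)$ with $Z(\kappa) = \sum_{\vect a\in\Set A}\exp(\kappa\hat{\Gamma}_{\vect a})$, so the quantity of interest is $\tfrac{d^2}{d\kappa^2}\log Z(\kappa)\ge 0$ by convexity of the log-partition function. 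Combining this with the reduction of the first step yields that $\mathbb{E}_{\,\vect{\Pi}^{(\kappa)}}\big[\hat{U}_{\vect a}\big]$ is nondecreasing in $\kappa$, which is the claim; I would also remark that the increase is strict as soon as the regrets $\hat{\Gamma}_{\vect a}$ are not all equal, since then the variance is strictly positive.

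The computation is routine, so the only point that needs care is the modelling assumption — consistent with Theorems~\ref{thm:kappa_infinity} and~\ref{thm:kappa_monotonic} — that the ensemble-average regrets $\hat{\Gamma}_{\vect a}$, and hence the constant $U$, are held fixed as $\kappa$ varies. I would make this explicit: $\hat{\Gamma}$ is defined in the proof of Theorem~\ref{thm:convergence} as the limiting ensemble average of the regret estimation produced by~(\ref{eqn:algoUpdates}), and there it plays the role of a potential of the Gibbs field that is not itself a function of the temperature $\kappa$; under that reading the argument above is complete. (If one instead wanted to track a possible $\kappa$-dependence of $\hat{\Gamma}^{(\kappa)}$ through $U_{\Set{C}_i}^{(\kappa)}$, the statement would become a self-consistency relation rather than a closed-form monotonicity, which is beyond the scope of this analysis.) I therefore expect no substantive obstacle, only the need to state this convention clearly.
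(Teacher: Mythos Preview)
Your proposal is correct and follows essentially the same route as the paper: both differentiate $\mathbb{E}_{\,\vect{\Pi}^{(\kappa)}}[\hat{\Gamma}_{\vect a}]$ with respect to $\kappa$, recognize the result as the variance of $\hat{\Gamma}_{\vect a}$ under $\vect{\Pi}^{(\kappa)}$, and then pass to $\hat{U}_{\vect a}$ via the affine relation $\hat{\Gamma}_{\vect a}=\hat{U}_{\vect a}-U$. Your additional framing via convexity of the log-partition function and your explicit discussion of the convention that $\hat{\Gamma}$ is held fixed in $\kappa$ are welcome clarifications but do not change the underlying argument.
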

\begin{IEEEproof}
	Consider the first derivative of the expected regrets:
	\begin{align*}
	\frac{ \partial\mathbb{E}_{\,\vect{\Pi}^{(\kappa)}} \big[ \hat{\Gamma}_{\vect a} \big] }{\partial\kappa}
	&= \frac{\partial}{\partial\kappa} \bigg( \textstyle\sum_{\forall\vect a\in\Set A} \Pi_{\vect a}^{(\kappa)}\hat{\Gamma}_{\vect a} \bigg) \\
	&= \sum_{\forall\vect a\in\Set A} \Pi_{\vect a}^{(\kappa)}\hat{\Gamma}_{\vect a} \big( \hat{\Gamma}_{\vect a} - \mathbb{E}_{\,\vect\Pi^{(\kappa)}} \big[\hat{\Gamma}_{\vect a}\big] \big) \\
	&= \mathbb{E}_{\,\vect\Pi^{(\kappa)}} \big[(\hat{\Gamma}_{\vect a})^2\big] - \mathbb{E}_{\,\vect\Pi^{(\kappa)}}^{\,2} \big[\hat{\Gamma}_{\vect a}\big] \\
	&= \mathbb{E}_{\,\vect\Pi^{(\kappa)}} \big[ \big( \hat{\Gamma}_{\vect a} - \mathbb{E}_{\,\vect\Pi^{(\kappa)}} \big[\hat{\Gamma}_{\vect a}\big] \big)^2 \big].
	\end{align*}
	Here, $\mathbb{E}_{\,\vect\Pi^{(\kappa)}} \big[ \big( \hat{\Gamma}_{\vect a} - \mathbb{E}_{\,\vect\Pi^{(\kappa)}} \big[\hat{\Gamma}_{\vect a}\big] \big)^2 \big]$ is the variance of $\vect{\hat{\Gamma}}$ over $\vect\Pi^{(\kappa)}$ and thus, $\frac{ \partial }{\partial\kappa} \Big( \mathbb{E}_{\,\vect{\Pi}^{(\kappa)}} \big[ \hat{\Gamma}_{\vect a} \big] \Big) > 0$ is held.
	Since $\vect{\hat{\Gamma}}$ is monotonically increasing with $\vect{\hat{U}}$ due to linear dependency, $\frac{ \partial }{\partial\kappa} \Big( \mathbb{E}_{\,\vect{\Pi}^{(\kappa)}} \big[ \hat{U}_{\vect a} \big] \Big) > 0$ is held as well.
\end{IEEEproof}

Based on the above theorems, the choice of a large $\kappa$ ensures the close global optimality.
However, further increasing $\kappa$ always improves the result and thus, $\kappa$ needs to be bounded above for practical implementations.
Following Theorem \ref{thm:kappa_expected_utility}, Corollary \ref{cor:kappa_bound} states that we can determine $\kappa$ which satisfies a given threshold.
\begin{corollary}\label{cor:kappa_bound}
	For any given threshold $\bar{U}<\mathbb{E}_{\,\vect{\Pi}^{(\infty)}} \big[ \hat{U}_{\vect a} \big]$, there exists a finite $\kappa$ such that $\bar{U}< \mathbb{E}_{\,\vect{\Pi}^{(\kappa)}} \big[ \hat{U}_{\vect a} \big] < \mathbb{E}_{\,\vect{\Pi}^{(\infty)}} \big[ \hat{U}_{\vect a} \big]$.
\end{corollary}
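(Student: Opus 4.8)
The plan is to derive the corollary from Theorem~\ref{thm:kappa_expected_utility} and Theorem~\ref{thm:kappa_infinity} by an elementary continuity-and-monotonicity argument. Write $g(\kappa)\triangleq\mathbb{E}_{\,\vect{\Pi}^{(\kappa)}}\big[\hat{U}_{\vect{a}}\big]=\sum_{\forall\vect{a}\in\Set{A}}\Pi_{\vect{a}}^{(\kappa)}\hat{U}_{\vect{a}}$ for the expected system utility under the stationary distribution at temperature $\kappa$. First I would note that, because $\Set{A}$ is finite, each coordinate $\Pi_{\vect{a}}^{(\kappa)}$ in (\ref{eqn:stationary_distribution}) is a ratio of finite sums of the smooth functions $\exp(\kappa\hat{\Gamma}_{\vect{a}'})$ with a strictly positive denominator; hence $\kappa\mapsto\Pi_{\vect{a}}^{(\kappa)}$, and therefore $g$, is continuous (indeed $C^{\infty}$) on $[0,\infty)$.

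Next I would invoke Theorem~\ref{thm:kappa_expected_utility}: its proof shows that $\frac{\partial}{\partial\kappa}g(\kappa)$ equals the variance of $\vect{\hat{\Gamma}}$ under $\vect{\Pi}^{(\kappa)}$, since $\hat{U}_{\vect{a}}$ and $\hat{\Gamma}_{\vect{a}}$ differ by an additive constant; this variance is \emph{strictly} positive for every finite $\kappa$ whenever $\vect{\hat{\Gamma}}$ is non-constant on $\Set{A}$, because $\vect{\Pi}^{(\kappa)}\succ\mathbf{0}$. If instead $\vect{\hat{\Gamma}}$ (equivalently $\vect{\hat{U}}$) is constant on $\Set{A}$, then $\Set{A}^\star=\Set{A}$ and $g$ is identically $\mathbb{E}_{\,\vect{\Pi}^{(\infty)}}\big[\hat{U}_{\vect{a}}\big]$, so the hypothesis $\bar{U}<\mathbb{E}_{\,\vect{\Pi}^{(\infty)}}\big[\hat{U}_{\vect{a}}\big]$ already gives $\bar{U}<g(\kappa)$ for every $\kappa$ and the claim is immediate; I would dispose of this degenerate case first. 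In the remaining case $g$ is strictly increasing on $[0,\infty)$. By Theorem~\ref{thm:kappa_infinity}, $\Pi_{\vect{a}}^{(\kappa)}\to\Pi_{\vect{a}}^{(\infty)}$ for every $\vect{a}\in\Set{A}$ as $\kappa\to\infty$, and since $\Set{A}$ is finite this yields $\lim_{\kappa\to\infty}g(\kappa)=\mathbb{E}_{\,\vect{\Pi}^{(\infty)}}\big[\hat{U}_{\vect{a}}\big]$. Combining this limit with strict monotonicity shows that $\mathbb{E}_{\,\vect{\Pi}^{(\infty)}}\big[\hat{U}_{\vect{a}}\big]$ is the strict supremum of $g$, so $g(\kappa)<\mathbb{E}_{\,\vect{\Pi}^{(\infty)}}\big[\hat{U}_{\vect{a}}\big]$ for every finite $\kappa$, which is the right-hand inequality of the corollary.

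For the left-hand inequality, fix any $\bar{U}<\mathbb{E}_{\,\vect{\Pi}^{(\infty)}}\big[\hat{U}_{\vect{a}}\big]=\lim_{\kappa\to\infty}g(\kappa)$. By the definition of this limit --- equivalently, by the intermediate value theorem applied to the continuous, increasing $g$ --- there is a finite $\kappa_0$ with $g(\kappa_0)>\bar{U}$; then every $\kappa\ge\kappa_0$ satisfies $\bar{U}<g(\kappa)<\mathbb{E}_{\,\vect{\Pi}^{(\infty)}}\big[\hat{U}_{\vect{a}}\big]$, which is precisely the assertion. I expect the only points needing care to be (i) the degenerate sub-case above, and (ii) confirming that the $\kappa\to\infty$ limit of $g$ is attained as a supremum rather than merely an upper bound --- both supplied by the two cited theorems --- after which the remainder is routine real analysis.
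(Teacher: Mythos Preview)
Your argument is correct and is exactly the approach the paper intends: the paper's own proof is the single line ``This result follows directly from Theorem~\ref{thm:kappa_expected_utility},'' and you have simply spelled out the continuity, strict monotonicity, and $\kappa\to\infty$ limit (via Theorem~\ref{thm:kappa_infinity}) that this sentence implicitly relies on. One small slip worth noting: in your degenerate sub-case $g$ is identically equal to $\mathbb{E}_{\,\vect{\Pi}^{(\infty)}}[\hat{U}_{\vect a}]$, so the \emph{right-hand} strict inequality of the corollary cannot hold for any $\kappa$---the corollary as stated tacitly assumes the non-degenerate case rather than being ``immediate'' in the degenerate one.
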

This result follows directly from Theorem~\ref{thm:kappa_expected_utility}.

\subsection{Convergence to An Equilibrium}\label{sec:convergence}

The results of the stationary distribution, the optimality of the solution, and its dependency on the BG temperature $\kappa$ are used to study the equilibrium of the game.
The stationary distribution $\vect \Pi$ defines the sequence of actions which the players are going to select.
If players have no intention to deviate from such a sequence based on their interest on improving the utilities, the game is considered to be in an \emph{$\epsilon$-coarse correlated equilibrium} (CCE)~\cite{jnl:perlaza13,pap:bennis12}.
\begin{definition}\label{def:epsionCCE}
	(\emph{$\epsilon$-coarse correlated equilibrium}):
	The mixed strategy probability $\vect{\Pi}=(\Pi_{\vect a},\allowbreak\forall\vect{a}\in\Set{A})$ is an $\epsilon$-CCE if, $\forall \Set{C}_i\in\allclust$ and $\forall a'_{\Set{C}_i}\in\Set{A}_{\Set{C}_i}$, where:
	\begin{equation*}\label{eqn:epsilonCCE}
	\textstyle\sum\limits_{\forall \vect{a}_{-\Set{C}_i}\in\Set{A}_{-\Set{C}_i}} U_{\Set{C}_i,(a'_{\Set{C}_i},\vect{a}_{-\Set{C}_i})}\Pi_{-\Set{C}_i,\vect{a}_{-\Set{C}_i}} - \sum\limits_{\forall \vect{a}\in\Set{A}} U_{\Set{C}_i,\vect{a}} \Pi_{\vect{a}}  \leq \epsilon,
	\end{equation*}
	for some $\epsilon>0$ with $\Pi_{-\Set{C}_i,\vect{a}_{-\Set{C}_i}}=\sum_{\forall {a}_{\Set{C}_i}\in\Set{A}_{\Set{C}_i}} \Pi_{({a}_{\Set{C}_i},\vect{a}_{-\Set{C}_i})}$ being the marginal probability distribution with respect to the action ${a}_{\Set{C}_i}$.
\end{definition}
Here, the players do not deviate from the mixed-strategy as long as the increment of the payoff is below $\epsilon$.
Finally, the convergence of the proposed algorithm to an $\epsilon$-CCE is given by the following theorem:
\begin{theorem}\label{thm:converge_to_eCCE}
	Algorithm~\ref{alg:active_deactive} converges to a stationary mixed strategy probability distribution $\vect{\Pi}^{(\kappa)}$ which constitutes an $\epsilon$-CCE for the game $\Set{G} = \big( \allclust, \{\Set{A}_{\Set{C}_i}\}_{\Set{C}_i\in\allclust},\allowbreak \{{u}_{\Set{C}_i}\}_{\Set{C}_i\in\allclust} \big)$.
\end{theorem}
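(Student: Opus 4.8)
The plan is to combine the convergence statement of Theorem~\ref{thm:convergence} with the optimality results of Section~\ref{subsec:optimality} and then check the inequality of Definition~\ref{def:epsionCCE} directly for the limiting stationary distribution. By Theorem~\ref{thm:convergence}, Algorithm~\ref{alg:active_deactive} drives the action selection probability to the product-form stationary distribution $\vect{\Pi}^{(\kappa)}$ of~(\ref{eqn:stationary_distribution}), so it suffices to exhibit some $\epsilon>0$ for which $\vect{\Pi}^{(\kappa)}$ satisfies the $\epsilon$-CCE condition. Fixing an arbitrary cluster $\Set{C}\in\allclust$ and a candidate unilateral deviation $\alpha'\in\Set{A}_{\Set{C}}$, I would write the left-hand side of the condition as the deviation gain $\sum_{\beta\in\Set{A}_{-\Set{C}}} U_{\Set{C},(\alpha',\beta)}\,\Pi^{(\kappa)}_{-\Set{C},\beta} - \sum_{\vect a\in\Set{A}} U_{\Set{C},\vect a}\,\Pi^{(\kappa)}_{\vect a}$ and compare it to the same quantity evaluated under the $\kappa\to\infty$ limit $\vect{\Pi}^{(\infty)}$ supplied by Theorem~\ref{thm:kappa_infinity}.

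The second step is to justify the chain of inequalities displayed in~(\ref{eqnlong}). Starting from the deviation gain under $\vect{\Pi}^{(\infty)}$ --- the line marked ``LHS'' --- I would first replace $\vect{\Pi}^{(\infty)}$ by $\vect{\Pi}^{(\kappa)}$ inside the term carrying the deviation action $\alpha'$; this inequality is what Theorem~\ref{thm:kappa_monotonic} buys us, since increasing $\kappa$ monotonically shifts mass onto the optimal actions $\Set{A}^\star$ and away from the configurations that would make the deviation attractive. I would then replace $\vect{\Pi}^{(\infty)}$ by $\vect{\Pi}^{(\kappa)}$ in the baseline term $\sum_{\vect a} U_{\Set{C},\vect a}\Pi_{\vect a}$; because $\vect{\Pi}^{(\kappa)}\to\vect{\Pi}^{(\infty)}$ as $\kappa\to\infty$ and the per-cluster utilities are bounded, the residual $\sum_{\vect a} U_{\Set{C},\vect a}\big(\Pi^{(\kappa)}_{\vect a}-\Pi^{(\infty)}_{\vect a}\big)$ is bounded by a constant $\epsilon>0$ that shrinks with $\kappa$, which is precisely the kind of estimate furnished by Theorem~\ref{thm:kappa_expected_utility} and Corollary~\ref{cor:kappa_bound}. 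The net effect of~(\ref{eqnlong}) is that the deviation gain under $\vect{\Pi}^{(\kappa)}$ is at most ``LHS'' plus $\epsilon$.

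It then remains to show that ``LHS'', the deviation gain under $\vect{\Pi}^{(\infty)}$, is non-positive. By Theorem~\ref{thm:kappa_infinity}, $\vect{\Pi}^{(\infty)}$ is uniform on the set $\Set{A}^\star$ of global optimizers of~(\ref{eqn:optimization_energy_efficiency_network}), so every action in its support attains the maximal ensemble-average regret; combining this with the identification of the per-cluster deviation gain with the residual regret $\hat{\Gamma}_{\Set{C},\alpha'}$ used in the proof of Theorem~\ref{thm:convergence}, and with the fact that the strictly positive Boltzmann--Gibbs weights force the regret-weighted average over a cluster's own actions to vanish, no unilateral deviation of cluster $\Set{C}$ can carry strictly positive regret, hence ``LHS'' $\le 0$. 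Putting the two parts together, the left-hand side of Definition~\ref{def:epsionCCE} is at most $\epsilon$ for the chosen $\Set{C}$ and $\alpha'$, and since these were arbitrary, $\vect{\Pi}^{(\kappa)}$ is an $\epsilon$-CCE of $\Set{G}$.

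The main obstacle is the non-positivity of ``LHS'': because $\Set{G}$ is not a potential game, global optimality of the aggregate cost $\sum_{\Set{C}_i}\Upsilon_{\Set{C}_i}$ does not by itself prevent an individual cluster from lowering its own cost by deviating, so this step genuinely relies on the Boltzmann--Gibbs fixed-point structure rather than on the objective of~(\ref{eqn:optimization_energy_efficiency_network}) alone. A secondary but more routine difficulty is making the $\vect{\Pi}^{(\kappa)}$-versus-$\vect{\Pi}^{(\infty)}$ comparison in~(\ref{eqnlong}) quantitative enough to name an explicit $\epsilon$, which is exactly where the monotonicity in $\kappa$ (Theorems~\ref{thm:kappa_monotonic} and~\ref{thm:kappa_expected_utility}, Corollary~\ref{cor:kappa_bound}) does the work.
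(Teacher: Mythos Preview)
Your proposal follows essentially the same two-stage approach as the paper: first show the deviation gain (``LHS'') is non-positive under the limiting distribution $\vect{\Pi}^{(\infty)}$ via Theorem~\ref{thm:kappa_infinity}, then use the chain~(\ref{eqnlong}) together with the monotonicity-in-$\kappa$ results (Theorems~\ref{thm:kappa_monotonic}--\ref{thm:kappa_expected_utility}, Corollary~\ref{cor:kappa_bound}) to transfer the bound to finite $\kappa$ at the cost of an $\epsilon$-slack. The only noteworthy difference is in the LHS $\leq 0$ step: the paper argues it by contradiction---if LHS $>0$ then some deviation $(\alpha',\beta)$ would beat an optimal profile $(\alpha'',\beta)\in\Set{A}^\star$ in per-cluster utility and hence in regret, contradicting the maximality of $\Set{A}^\star$ from Theorem~\ref{thm:kappa_infinity}---which is crisper than your appeal to the Boltzmann--Gibbs fixed-point structure and the somewhat vague claim that the ``regret-weighted average over a cluster's own actions vanishes,'' but the substance is the same.
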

\begin{proof}
	For the notation simplicity any action $\vect{a}\in\Set{A}$ is denoted as $\vect{a}\triangleq(\alpha,\beta)$ with $\alpha\in\Set{A}_{\Set{C}_i}$ and $\beta\in\Set{A}_{-\Set{C}_i}$ for any given $\Set{C}_i\in\allclust$.
	Using a simplified notation, the left hand side (LHS) of the $\epsilon$-CCE condition can be reformulated as follows:
	\begin{align*}
	\text{LHS}
	&= \textstyle\sum\limits_{\forall \beta\in\Set{A}_{-\Set{C}_i}} U_{\Set{C}_i,(\alpha',\beta)} \sum\limits_{\forall \alpha\in\Set{A}_{\Set{C}_i}} \Pi_{(\alpha,\beta)} - \\ &\qquad \qquad  \qquad\textstyle\sumsum\limits_{\forall\alpha\in\Set{A}_{\Set{C}_i},\forall\beta\in\Set{A}_{-\Set{C}_i}} U_{\Set{C}_i,(\alpha,\beta)} \Pi_{(\alpha,\beta)} \\
	&= \textstyle\sumsum\limits_{\forall\alpha\in\Set{A}_{\Set{C}_i},\forall\beta\in\Set{A}_{-\Set{C}_i}}  \Pi_{(\alpha,\beta)} \big( U_{\Set{C}_i,(\alpha',\beta)} - U_{\Set{C}_i,(\alpha,\beta)} \big).
	\end{align*}

	First, consider the scenario where $\kappa\to\infty$.
	Therefore, the mixed strategy probability of action $(\alpha,\beta)\in\Set{A}$, ${\Pi}_{(\alpha,\beta)}={\Pi}^{(\infty)}_{(\alpha,\beta)}$ is either $0$ or $1/|\Set{A}^\star|$ according to Theorem \ref{thm:kappa_infinity} and thus, LHS $=\sumsum_{(\alpha,\beta)\in\Set{A}^\star}  \big( U_{\Set{C}_i,(\alpha',\beta)} - U_{\Set{C}_i,(\alpha,\beta)} \big) / |\Set{A}^\star|$.
	Suppose $\vect{\Pi}^{(\infty)}$ does not ensure an $\epsilon$-CCE with $\epsilon=0$, i.e. LHS$\:>0$.
	This yields that $U_{\Set{C}_i,(\alpha',\beta)} > U_{\Set{C}_i,(\alpha'',\beta)}$ for some $(\alpha'',\beta)\in\Set{A}^\star$ and $\alpha'\in\Set{A}_{\Set{C}_i}$.
	Since the regret ${\Gamma}_{\Set{C}_i,(\alpha',\beta)}$ monotonically increases with ${U}_{\Set{C}_i,(\alpha',\beta)}$, the result is a higher regret for action $(\alpha',\beta)$ over the actions $(\alpha'',\beta)\in\Set{A}^\star$, i.e. ${\Gamma}_{\Set{C}_i,(\alpha',\beta)}>{\Gamma}_{\Set{C}_i,(\alpha'',\beta)}$.
	However, according to Theorem \ref{thm:kappa_infinity}, it can occur only when $(\alpha',\beta)\in\Set{A}^\star$ and $(\alpha'',\beta)\notin\Set{A}^\star$.
	This contradicts the former definition of the optimal set of solutions.
	Therefore, the assumption LHS$\:>0$ with $\vect{\Pi}^{(\infty)}$ does not hold, and $\vect{\Pi}^{(\infty)}$ converges to an $\epsilon$-CCE with $\epsilon=0$.

	Theorem \ref{thm:kappa_expected_utility} states that the expected utility with any finite $\kappa$ is lower than with $\kappa\to\infty$.
	Suppose the difference between expected utilities for a given finite and infinite $\kappa$ values is no greater than a positive value $\epsilon$ in which $\sum_{\forall(\alpha,\beta)\in\Set{A}} U_{\Set{C},(\alpha,\beta)} \Pi_{(\alpha,\beta)}^{(\infty)} - \sum_{\forall(\alpha,\beta)\in\Set{A}} U_{\Set{C},(\alpha,\beta)} \Pi_{(\alpha,\beta)}^{(\kappa)} \leq \epsilon$.
	Furthermore, the maximum expected utility with $\vect{\Pi}^{(\infty)}$ ensures that $\sum_{\forall(\alpha,\beta)\in\Set{A}} U_{\Set{C},(\alpha',\beta)} \Pi_{(\alpha,\beta)}^{(\kappa)} \leq \sum_{\forall(\alpha,\beta)\in\Set{A}} U_{\Set{C},(\alpha,\beta)} \Pi_{(\alpha,\beta)}^{(\infty)}$ for any $\alpha'\in\Set{A}_{\Set{C}}$ with $\vect{\Pi}^{(\kappa)}$.
	Combining above results the LHS is remodeled as (\ref{eqnlong}).
	Since LHS$\:\leq 0$ is held as discussed before, the result is $\text{LHS} + \epsilon \leq \epsilon$ and thus, $\vect{\Pi}^{(\kappa)}$ becomes an $\epsilon$-CCE as per Definition~\ref{def:epsionCCE}.
\end{proof}

\subsection{Practical aspects of the choice of $\kappa$}\label{subsec:kappa_for_converg}

Based on the above discussion regarding the optimality and convergence, it can be seen that the choice of $\kappa$ should be large to ensure optimality.
However, note that the optimality is only relevant when the algorithm converges to the steady state and the utilities and regrets are calculated in terms of their ensemble averages.
Before achieving convergence, the learning algorithm needs sufficient time to explore all the actions in order to make accurate estimations on the ensemble averages.

Consider a large $\kappa$ and the initial step $t=1$ with an arbitrary action $j_0\in\Set{A}_{\Set{C}_i}$ that has been played by the player ${\Set{C}_i}$.
Since no other action has been played so far except $j_0$, the estimations of the utility and the regret for action $j_0$ will be dominant, i.e. $\hat{u}_{\Set{C}_i,j_0}(1)>\hat{u}_{\Set{C}_i,j}(1)$ and $\hat{r}_{\Set{C}_i,j_0}(1)>\hat{r}_{\Set{C}_i,j}(1)$ for all $j\in\Set{A}_{\Set{C}_i}\setminus\{j_0\}$.
According to (\ref{eqn:BG_distribution}), as a result of using a large $\kappa$, $\exp\big(\kappa\hat{r}_{\Set{C}_i,j_0}(1)\big)\gg\exp\big(\kappa\hat{r}_{\Set{C}_i,j}(1)\big)$ results in $G_{\Set{C}_i,j_0}\big(\vect{\hat{r}}_{\Set{C}_i}(1)\big)\approx 1$ while $G_{\Set{C}_i,j}\big(\vect{\hat{r}}_{\Set{C}_i}(1)\big)\approx 0$ for all $j\in\Set{A}_{\Set{C}_i}\setminus\{j_0\}$.
Since the learning rate $\varepsilon_{\Set{C}_i}(1)=1$, the mixed strategy probability for $t=2$ becomes deterministic as
$\pi_{\Set{C}_i,j}(2)=1$ if $j=j_0$ and $\pi_{\Set{C}_i,j}(2)=0$ if $j\neq j_0$.
%
Thus, the choice of the player $\Set{C}_i$'s action remains unchanged, hereinafter.
Although this exhibits a fast convergence, the learning algorithm does not get the opportunity to explore the remaining actions and make an accurate estimation on the ensemble averages thus yielding an inaccurate steady state distribution.

In contrast, the choice of a small $\kappa$ will yield $\exp\big(\kappa\hat{r}_{\Set{C}_i,j}(1)\big)\approx\exp\big(\kappa\hat{r}_{\Set{C}_i,j'}(1)\big)$ and  $G_{\Set{C}_i,j}\big(\vect{\hat{r}}_{\Set{C}_i}(1)\big)\allowbreak\approx \frac{1}{|\Set{A}_{\Set{C}_i}|}$ for any $j,j'\in\Set{A}_{\Set{C}_i}$.
Thus, the mixed strategy probability remains as a (almost) uniform distribution based on the updating procedure given in (\ref{eqn:algoUpdates}).
Here, the learning algorithm explores all the actions indefinitely without exploiting actions with higher regrets which eventually converges to a stationary distribution with a low performance.
Therefore, from a practical implementation point of view, the choice of $\kappa$ needs to be sufficiently small for the algorithm to explore the action space and sufficiently large to improve the optimality of the solution once it converges to a stationary distribution.

\section{Simulation Results}\label{sec:results}

\begin{table}[!t]
	\centering
	\caption{Simulation parameters.}
	\label{tab:sim_para}
	\begin{tabular}{l c}
		\hline
		{\bf Parameter} & {\bf Value} \\
		\hline \hline
		Carrier frequency, System bandwidth	& $2$ GHz, $10$ MHz \\
		Thermal noise ($N_0$) &$-174$ dBm/Hz \\
		Mean traffic influx rate $\big(\eta(\vectx)\big)$ & 180 kbps\\
		Maximum transmission powers: MBS, SBS & $46,~30$ dBm \\
		Efficiency of power units ($\vartheta$): MBS, SBS~\cite{pap:georgios12B} & 23.55\%, 5.42\% \\
		Base power consumption ($P_b^{\texth{Base}}$): MBS, SBS & 40, 33 dBm \\
		Fraction of energy saved by switching OFF ($q_b$) & 0.5\\
		\hline \multicolumn{2}{c}{{\bf Minimum  distances and path loss models} ($d$ in km)~\cite{onln:3gpp10}} \\ \hline
		MBS -- SBS, MBS -- UE & 75 m, 35m \\
		SBS -- SBS, SBS -- UE & 40 m, 10 m \\
		MBS -- UE path loss& $128.1 + 37.6\log_{10}(d)$\\
		SBS -- UE path loss& $140.7 + 37.6\log_{10}(d)$\\
		\hline \multicolumn{2}{c}{{\bf Clustering}} \\ \hline
		Range of neighborhood ($\varepsilon_d$) & 250 m\\
		Impact of neighborhood width ($\sigma_d, \sigma_l$) & 300, 1 \\
		Tradeoff between similarities ($\theta$) & 0.5 \\
		Sensitivity of intra-cluster coordination cost ($\chi$) & $4.78$ dBm/m \\
		\hline \multicolumn{2}{c}{\bf Learning} \\ \hline
		Impact of load in UE association ($n$) & 1 \\
		Boltzmann temperature ($\kappa$) & 10\\
		Energy and load impacts on cost ($\lambda,~\mu$) & 0.5,~0.5 \\
		learning rate exponents for $\tau,~\iota~,\varepsilon~\mbox{and}~\nu$ & $0.6,~0.7,~0.8,~0.9$ \\
		\hline
	\end{tabular}
	\vspace{4pt}
\end{table}

For our simulations, we consider a single macrocell underlaid with an arbitrary number of SBSs and UEs uniformly distributed over the area.
All the BSs share the entire spectrum and thus, suffer from co-channel interference.
We conduct multiple simulations for various practical configurations and the presented results are averaged a large number of independent runs.
The parameters used for the simulations are summarized in Table~\ref{tab:sim_para}.
The proposed cluster-based coordination and learning based ON/OFF mechanisms are compared with the conventional network operation referred to hereinafter as ``classical approach'' in which BSs always transmit.
For further comparisons, we consider a random BS ON/OFF switching with equal probability and finally an uncoordinated learning based ON/OFF mechanism without forming clusters.
These are referred to as ``random ON/OFF" and ``learning without clusters".
As we consider three different clustering techniques, $k$-mean clustering, spectral clustering and P2P-BS clustering, they are referred to as ``learning with $k$-mean clustering", ``learning with spectral clustering", and ``learning with P2P-BS clustering" hereinafter, respectively.
For all these three clustering methods, the clusters remain unchanged for an interval of $N=100$.

\subsection{Performance of Proposed Mechanisms Based on Network Cost}\label{subsec:results_performance}

\begin{figure}[!t]
	\centering
	\includegraphics[trim = 0mm 0mm 0mm 5mm, clip,keepaspectratio,width=\myfigfactor\textwidth]{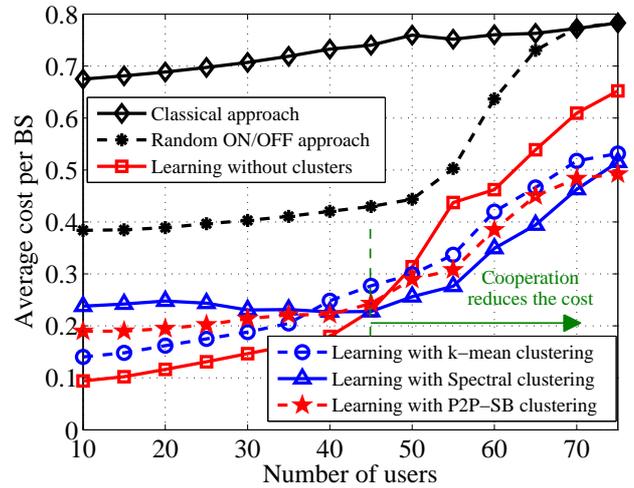}
	\vspace{-8pt}
	\caption{Average cost per BS as a function of the number of UEs with 10 SBSs. The joint similarity is used for the clustering.}
	\label{fig:cost_vs_ue}
	\vspace{-5pt}
\end{figure}

Fig. \ref{fig:cost_vs_ue} shows the changes in the average cost per BS as the number of UEs varies.
As the number of UEs increases, the BS energy consumption and load increase and thus, the average cost increases.
However, Fig.~\ref{fig:cost_vs_ue} shows that the proposed learning approaches reduce the average cost by balancing between the energy consumption and the load.
The reduction of average cost in learning without clusters, learning with $k$-mean clustering, learning with spectral clustering, and learning with P2P-BS clustering compared to the classical scenario are $60.2\%$, $60.5\%$, $59.8\%$, and $60.6\%$, respectively.
Although the random ON/OFF approach manages to reduce the cost by $31.3\%$ compared to the classical scenario, the cost reduction is not as high as compared to the learning approaches.
The random ON/OFF behavior leads to overloading the switched-ON BSs and thus, the energy saving of switched-OFF BSs is insignificant due to the increased load due to the traffic and the energy consumption of overloaded BSs.
As the number of UEs increases, BSs have lesser opportunities to switch-OFF while satisfying the UEs' QoS and thus, the behavior becomes closer to the classical approach.

In Fig.~\ref{fig:cost_vs_ue}, we can see that for small network sizes, the cluster-based approaches consume more energy than the learning approach without clustering.
This is due to the fact that, for such networks, only a small number of BSs needs to be switched-ON and, thus, the cluster-based approaches would require extra energy of $\delta P_b^{\texth{Base}}$ for coordination.
In contrast, for highly-loaded networks, as seen in Fig.~\ref{fig:cost_vs_ue}, clustering allows to better offload traffic and, subsequently, improve the overall energy efficiency.
Moreover, for highly-loaded networks, the spectral clustering approach reduces the cost compared to all three clustering mechanisms.
Although P2P-SB clustering is a decentralized clustering method, the cost reduction of learning with P2P-SB is higher than the centralized $k$-mean clustering for highly loaded networks, in which P2P-SB clustering exploits connectivity and compactness of nodes in the graph $G=(\Set{B},\Set{E})$ similar to the centralized spectral clustering method.
In this respect, Fig.~\ref{fig:cost_vs_ue} shows that spectral clustering yields, respectively, up to $48\%,~46\%,~26\%,~15\%$, and $12\%$ of cost reduction, relative to the classical approach, random ON/OFF, learning without clusters, learning with $k$-mean clustering, and learning with P2P-SB clustering, for a network with 65 UEs.

\subsection{Reductions of Energy Consumption and Time Load}

In Fig. \ref{fig:cdf_energy}, we show the CDF of the BSs' energy consumption for 10 SBSs and 50 UEs.
Fig. \ref{fig:cdf_energy} illustrates that the random ON/OFF approach achieves the highest energy consumption.
This is due to the fact that the switching OFF of some random BSs overloads the switched-ON BSs which causes a higher energy consumption.
The increased energy consumption is not compensated by the energy saving of switched-OFF BSs and thus, a higher average energy consumption per BS can be observed.
Moreover, we can see that for both classical and random ON/OFF approaches, the fraction of BSs having a high energy consumption is much higher than in cases with learning.
Indeed, the proposed learning method allows lightly-loaded BSs to offload their traffic and switch OFF, thus yielding significant energy reductions.
Coordination between clusters allows more BSs to switch OFF and, thus, as shown in Fig.~\ref{fig:cdf_energy}, the proposed learning approaches with clustering yield larger number of BSs consuming less energy.
The average energy consumption reductions of learning without clusters approach is $11\%$ compared to the classical approach.
With cluster-based coordination, for $k$-mean clustering, spectral clustering, and P2P-SB clustering, the average energy consumptions is further reduced by $18\%$, $30\%$, and $25\%$, respectively, compared to the learning without clusters.

\begin{figure}[!t]
	\centering
	\subfloat[Cumulative density function of BS energy consumption.]{
		\includegraphics[trim = 0mm 0mm 0mm 5mm, clip,keepaspectratio,width=\mysubfigfactor\textwidth]{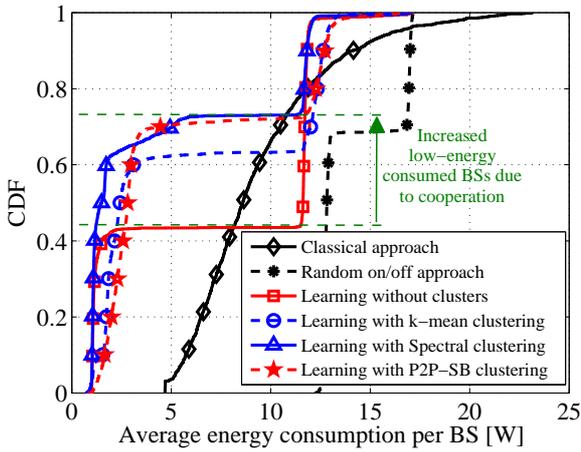}
		\vspace{-10pt}
		\label{fig:cdf_energy}
	}\hfill
	\subfloat[Cumulative density function of BS time load.]{
		\includegraphics[trim = 0mm 0mm 0mm 3mm, clip,keepaspectratio,width=\mysubfigfactor\textwidth]{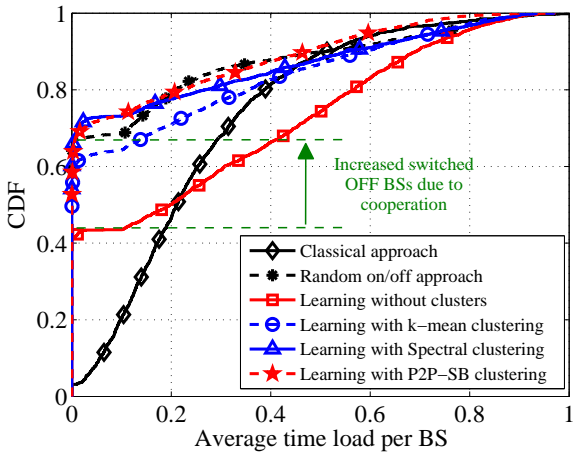}
		\vspace{-10pt}
		\label{fig:cdf_load}
	}
	\caption{Cumulative density functions of BS energy consumption and time load for 10 SBSs and 50 UEs. The joint similarity is used for the clustering.}
	\label{fig:cdfs}
	\vspace{-17pt}
\end{figure}

In Fig. \ref{fig:cdf_load}, we show the CDF of the BSs' load for 10 SBSs and 50 UEs.
Here, the random ON/OFF method exhibits a large number of BSs with low load.
Similar to the CDF of BSs' energy consumption, Fig.~\ref{fig:cdf_load} shows that the proposed learning and the dynamic clustering yield a higher number of switched-OFF BSs.
However, we can see that the average load of learning without clusters is increased by $5\%$ compared to the classical approach.
This increased load is due to the selfish behavior of BSs in learning without clusters approach.
This is due to the fact that these BSs offload their traffic without coordination with the neighboring BSs, and thus, inefficient offloading causes an increased load on average.
With the cluster-based coordination, the traffic offloading among BSs become more efficient.
Thus, $k$-mean clustering, spectral clustering and P2P-BS clustering methods yield, respectively, $34\%$, $48\%$, and $55\%$ of average load reductions compared to the classical approach.
Furthermore, Fig.~\ref{fig:cdf_load} illustrates that the proposed ON/OFF mechanism allows about $45\%$ of BSs in learning without clusters approach to switch OFF.
With the cluster-based coordination, traffic offloading becomes more efficient and thus, the number of switched-OFF BSs are further increased by $12\%-20\%$ compared to the scenario without clusters.

\subsection{Impact of Similarity and Range of Neighborhood}

\begin{figure}[!t]
	\centering
	\subfloat[Impact of similarities on the average BS cost.]{
		\includegraphics[trim = 0mm 0mm 0mm 5mm, clip,keepaspectratio,width=\mysubfigfactor\textwidth]{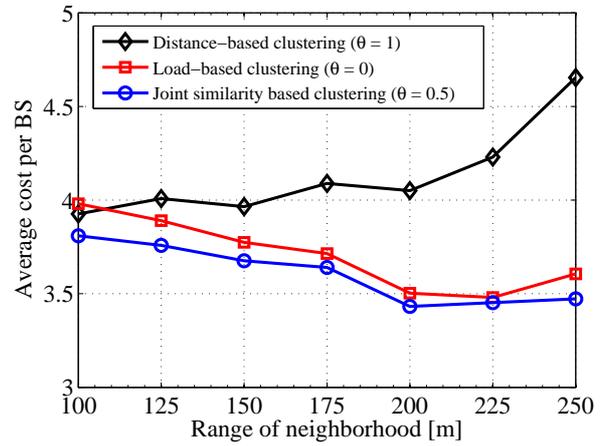}
		\vspace{-10pt}
		\label{fig:similarity_comparison}
	}\hfill
	\subfloat[Impact of similarities on the average number of clusters and the average cluster size.]{
		\includegraphics[trim = 0mm 0mm 0mm 3mm, clip,keepaspectratio,width=\mysubfigfactor\textwidth]{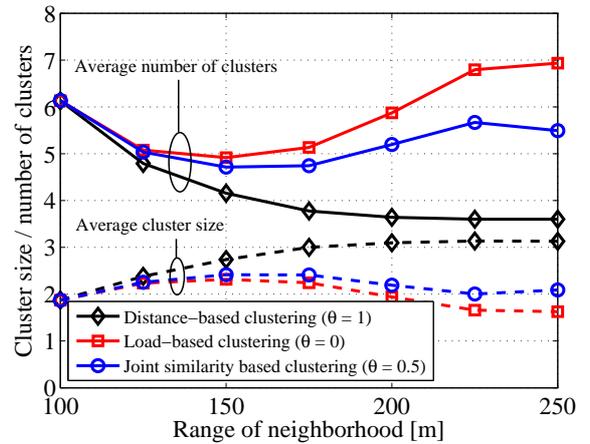}
		\vspace{-10pt}
		\label{fig:clusters_comparison}
	}
	\caption{Comparison of the effects of the similarities used for clustering with 10 SBSs and 50 UEs. The spectral clustering technique is used with the similarities calculated for $\theta = \{0, 0.5, 1\}$.}
	\vspace{-10pt}
\end{figure}

Fig. \ref{fig:similarity_comparison} shows the effect on the network cost as a function of the similarity.
The range of neighborhood ($\varepsilon_d$) defines the capability of having a physical link for the communication between BSs.
When $\theta=1$, the similarity only depends on the distance between BSs.
Thus, clustering becomes static and does not consider the BSs' loads.
As $\varepsilon_d$ increases, similarities between BSs located far from each other increases from zero to a positive quantity, and thus, larger clusters are formed with the distance-based similarity.
Since these BSs may not be able to share the traffic with one another, these large clusters become inefficient and result in increased network cost.
For $\theta=1/2$, as the knowledge of the load is combined with the distance similarity, clusters are formed with more locally-coupled BSs.
Therefore, the joint clustering results in better performance with increasing neighborhood width.
However, communication within a larger neighborhood requires more resources (cable costs and higher transmission losses for wired backhaul networks while higher transmission power and larger spectrum for over-the-air backhaul networks).
Therefore, further expanding the neighborhood increases the cost of the BSs which can be seen for $\varepsilon_d\leq 200$ m.
Clustering with only load-based similarity, $\theta=0$, displays similar behavior as the joint similarity.
The main reason is that the load-based similarity uses (\ref{eqn:neighborhood}) to determine the existence of edges.
Therefore, the limitation based on the distance implicitly appears in the load-based similarity as well.
Thus, the use of $\theta=0$ produces clusters with BSs which have sufficient capacity to support each others resulting in reduced cost compared to the clustering based on distance similarity only.

Fig.~\ref{fig:clusters_comparison} shows the average number of clusters and the average cluster sizes for the graphs with distance-based, load-based and joint similarities.
As the range of neighborhood ($\varepsilon_d$) increases, the non-zero weighted edges in $G=(\Set B, \Set E)$ increases.
Therefore, clustering based on distance similarity allows to group more BSs together thus, yielding a smaller number of clusters with large cluster size.
As the cluster size increases, the traffic in a cluster increases resulting increased loads in cluster members which directly impact the network cost.
As shown in Fig.~\ref{fig:clusters_comparison}, for $\theta<1$, up to $\varepsilon_d<175$ m, the network cost is reduced by forming large clusters.
However, for $\varepsilon_d>175$ m, increased traffic in a cluster affects the load-based similarity and thus, forming large clusters is avoided.

\subsection{Impact of Sensitivity $\chi$ on Clustering Cost }\label{subsec:overhead_cost_results}

\begin{figure}[t]
	\centering
	\includegraphics[trim = 0mm 0mm 0mm 5mm, clip,keepaspectratio,width=\myfigfactor\textwidth]{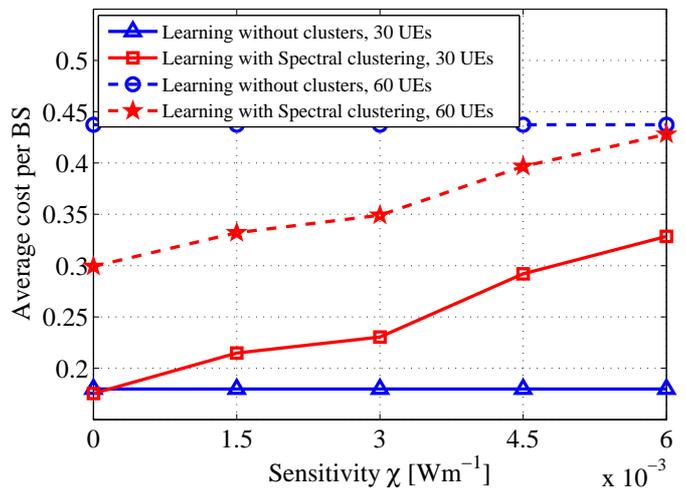}
	\vspace{-10pt}
	\caption{Comparison of the average cost per BS for learning with and without clusters for different choices of the sensitivity ($\chi$) parameter. The setup is for 10 SBSs and  $\varepsilon_d=250~\text{m}$.}
	\label{fig:cost_sensitivity}
\end{figure}

Fig.~\ref{fig:cost_sensitivity} shows the average cost per BS for learning without clusters and learning with spectral clustering for different choices of the sensitivity parameter $\chi$ in (\ref{eqn:overheadcost_model}).
The sensitivity parameter $\chi$ models how sensitive the clusters' power consumption is with respect to the overhead of their intra-cluster coordination.
The choices of these parameters are such that
\emph{i)} $\chi=0$ will lead to an overhead free (unrealistic) scenario,
\emph{ii)} $\chi=3\times10^{-3}$ implies $\delta P_b^{\texth{Base}}\approx \frac{1}{2}P_b^{\texth{Base}}$,
\emph{iii)} $\chi=6\times10^{-3}$ implies $\delta P_b^{\texth{Base}}\approx P_b^{\texth{Base}}$,
for $\varepsilon_d=250~\text{m}$ and the average cluster size at the above neighborhood range.
Since learning without clusters has no effect on this sensitivity due to the absence of clusters and intra-cluster coordination, the average cost remains unchanged for different choices of $\chi$.
For learning with spectral clustering, as $\chi$ increases, clusters need higher power consumption for their intra-cluster coordination and thus, the average cost per BS is increased.
According to (\ref{eqn:overheadcost_model}), intra-cluster coordination cost is independent of the number of UEs, and Fig.~\ref{fig:cost_sensitivity} shows that the changes in $\chi$ have the same effects on both systems with 30 UEs and 60 UEs.
It can be noted that the benefit of clustering to reduce the average cost increases with the number of UEs compared to learning without clustering method, as explained under Fig.~\ref{fig:cost_vs_ue}.

\section{Conclusions}\label{sec:conclusions}

In this paper, we have proposed a dynamic cluster-based ON/OFF mechanism for small cell base stations.
Clustering allows clustered base stations to coordinate their transmission while the clusters compete with one another to reduce a per-cluster cost based on their energy consumption and time load due to their traffic.
In this regard we have formulated the per-cluster cost minimization problem as a noncooperative game among clusters.
To solve the game, we have proposed a distributed algorithm and an intra-cluster coordination method using which base stations choose their transmission modes with minimum overhead.
Our proposed clustering method uses information on both the locations of BSs and their capability of handling the traffic and dynamically form the clusters in order to improve the overall performance.
For clustering, both centralized and decentralized clustering techniques are introduced and the performances are compared.
Simulation results have shown significant gains in energy expenditure and load reduction compared to the conventional transmission techniques.
Moreover, the results provide an insight of when and how to reap the benefits from the cluster-based coordination in small cell networks.

\bibliographystyle{IEEEtran}
\bibliography{myBibliography_jsw}

\begin{IEEEbiography}[{\includegraphics[width=1in,height=1.25in,clip,keepaspectratio]{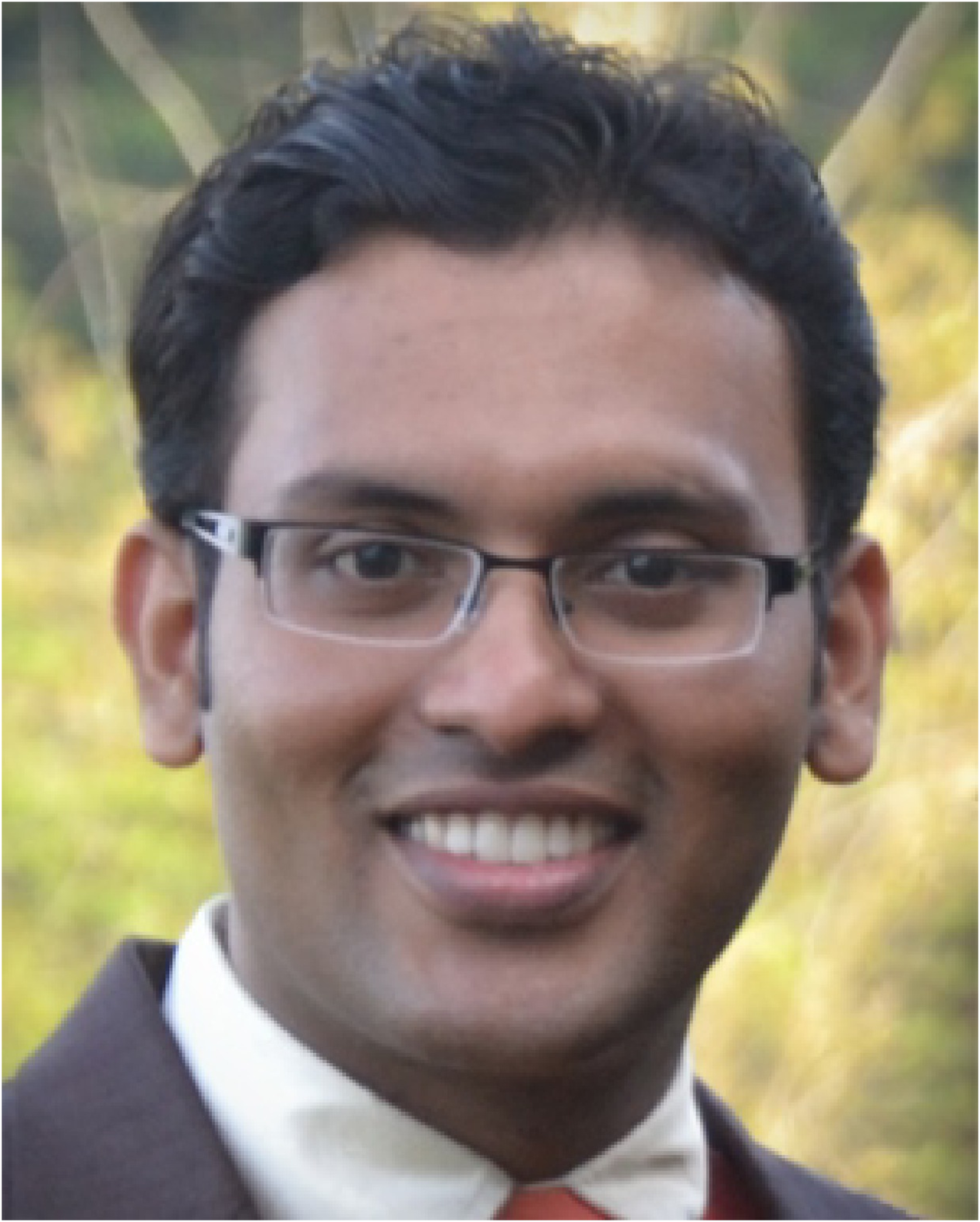}}]{Sumudu Samarakoon}
	received his B. Sc. (Hons.) degree in Electronic and Telecommunication Engineering from the University of Moratuwa, Sri Lanka and the M. Eng. degree from the Asian Institute of Technology, Thailand in 2009 and 2011, respectively.
	He is currently persuading Dr. Tech degree in Communications Engineering in University of Oulu, Finland.
	Sumudu is also a member of the research staff of the Centre for Wireless Communications (CWC), Oulu, Finland.
	His main research interests are in heterogeneous networks, radio resource management, machine learning, and game theory.
\end{IEEEbiography}

\begin{IEEEbiography}[{\includegraphics[width=1in,height=1.25in,clip,keepaspectratio]{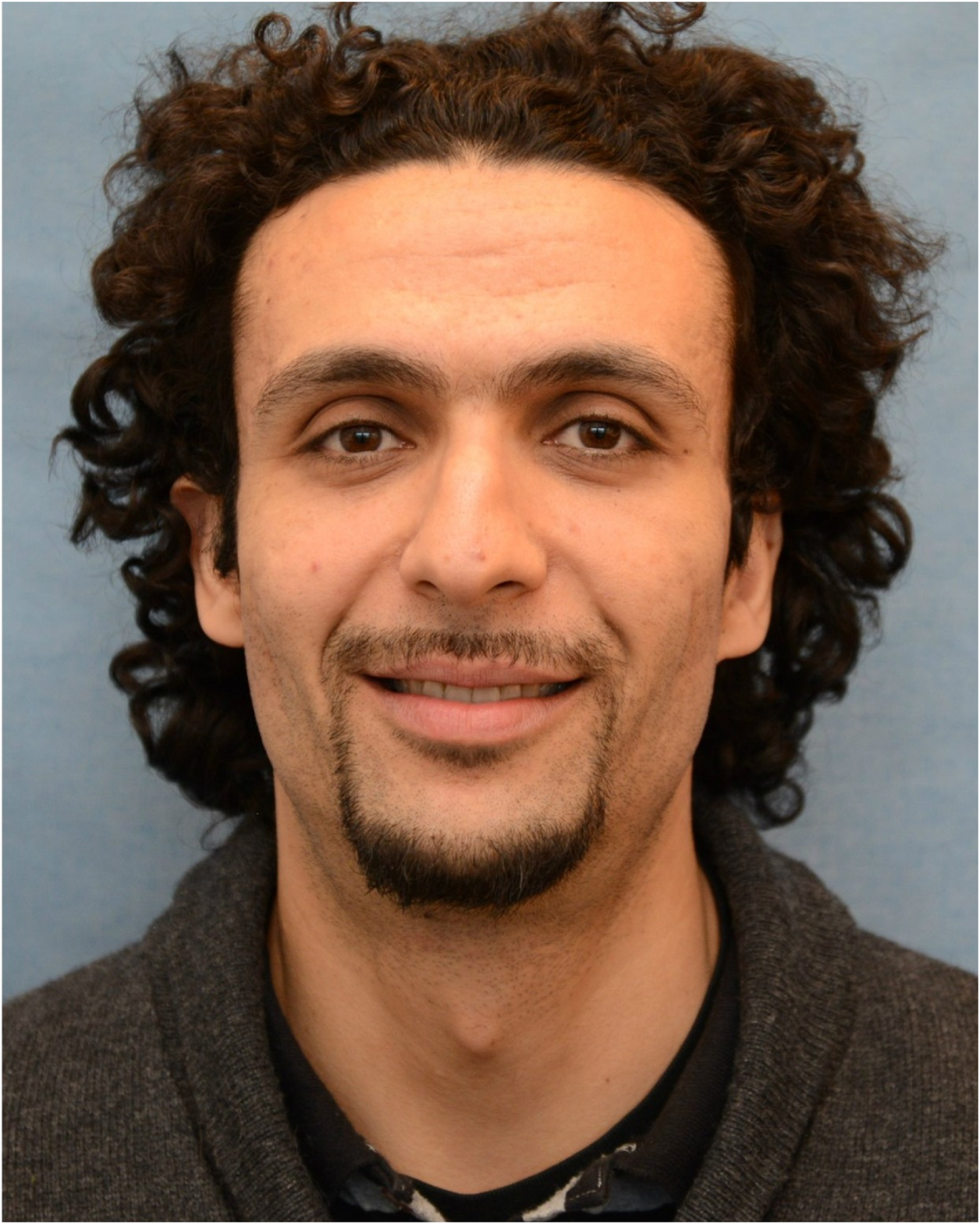}}]{Mehdi Bennis}
	received his M.Sc. degree in Electrical Engineering jointly from the EPFL, Switzerland and the Eurecom Institute, France in 2002.
	From 2002 to 2004, he worked as a research engineer at IMRA-EUROPE investigating adaptive equalization algorithms for mobile digital TV.
	In 2004, he joined the Centre for Wireless Communications (CWC) at the University of Oulu, Finland as a research scientist.
	In 2008, he was a visiting researcher at the Alcatel-Lucent chair on flexible radio, SUPELEC.
	He obtained his Ph.D. in December 2009 on spectrum sharing for future mobile cellular systems.
	
	His main research interests are in radio resource management, heterogeneous networks, game theory and machine learning in 5G networks and beyond.
	He has co-authored one book and published more than 100 research papers in international conferences, journals and book chapters.
	He was the recipient of the prestigious 2015 Fred W. Ellersick Prize from the IEEE Communications Society.
	Dr. Bennis serves as an editor for the IEEE Transactions on Wireless Communications.
\end{IEEEbiography}

\begin{IEEEbiography}[{\includegraphics[width=1in,height=1.25in,clip,keepaspectratio]{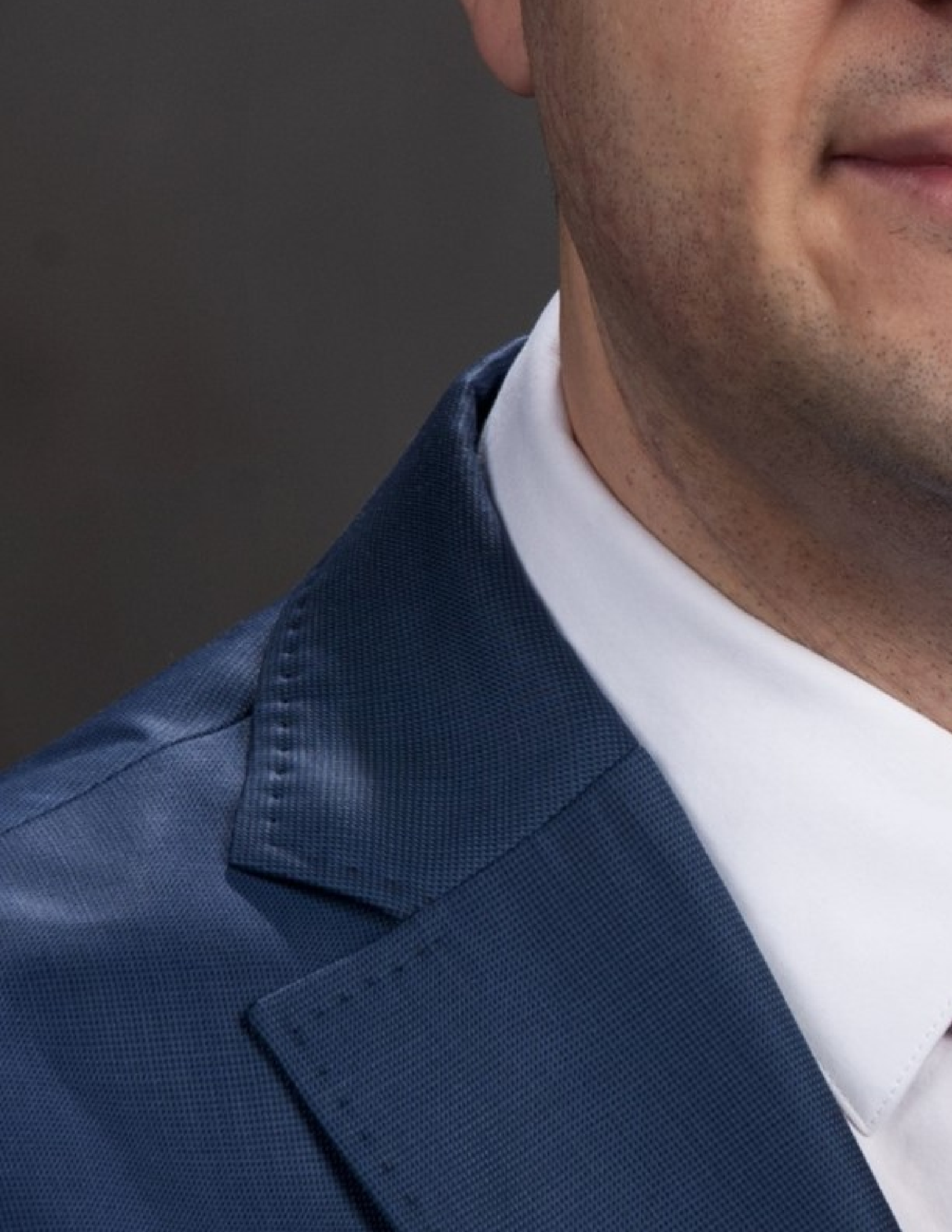}}]{Walid Saad}
	(S'07, M'10, SM’15) Walid Saad received his Ph.D degree from the University of Oslo in 2010.
	Currently, he is an Assistant Professor at the Bradley Department of Electrical and Computer Engineering at Virginia Tech, where he leads the Network Science, Wireless, and Security (NetSciWiS) laboratory, within the Wireless@VT research group.
	His  research interests include wireless and social networks, game theory, cybersecurity, and cyber-physical systems.
	Dr. Saad is the recipient of the NSF CAREER award in 2013, the AFOSR summer faculty fellowship in 2014, and the Young Investigator Award from the Office of Naval Research (ONR) in 2015.
	He was the author/co-author of four conference best paper awards at WiOpt in 2009, ICIMP in 2010, IEEE WCNC in 2012, and IEEE PIMRC in 2015.
	He is the recipient of the 2015 Fred W. Ellersick Prize from the IEEE Communications Society.
	Dr. Saad serves as an editor for the IEEE Transactions on Wireless Communications, IEEE Transactions on Communications, and IEEE Transactions on Information Forensics and Security.
\end{IEEEbiography}

\begin{IEEEbiography}[{\includegraphics[width=1in,height=1.25in,clip,keepaspectratio]{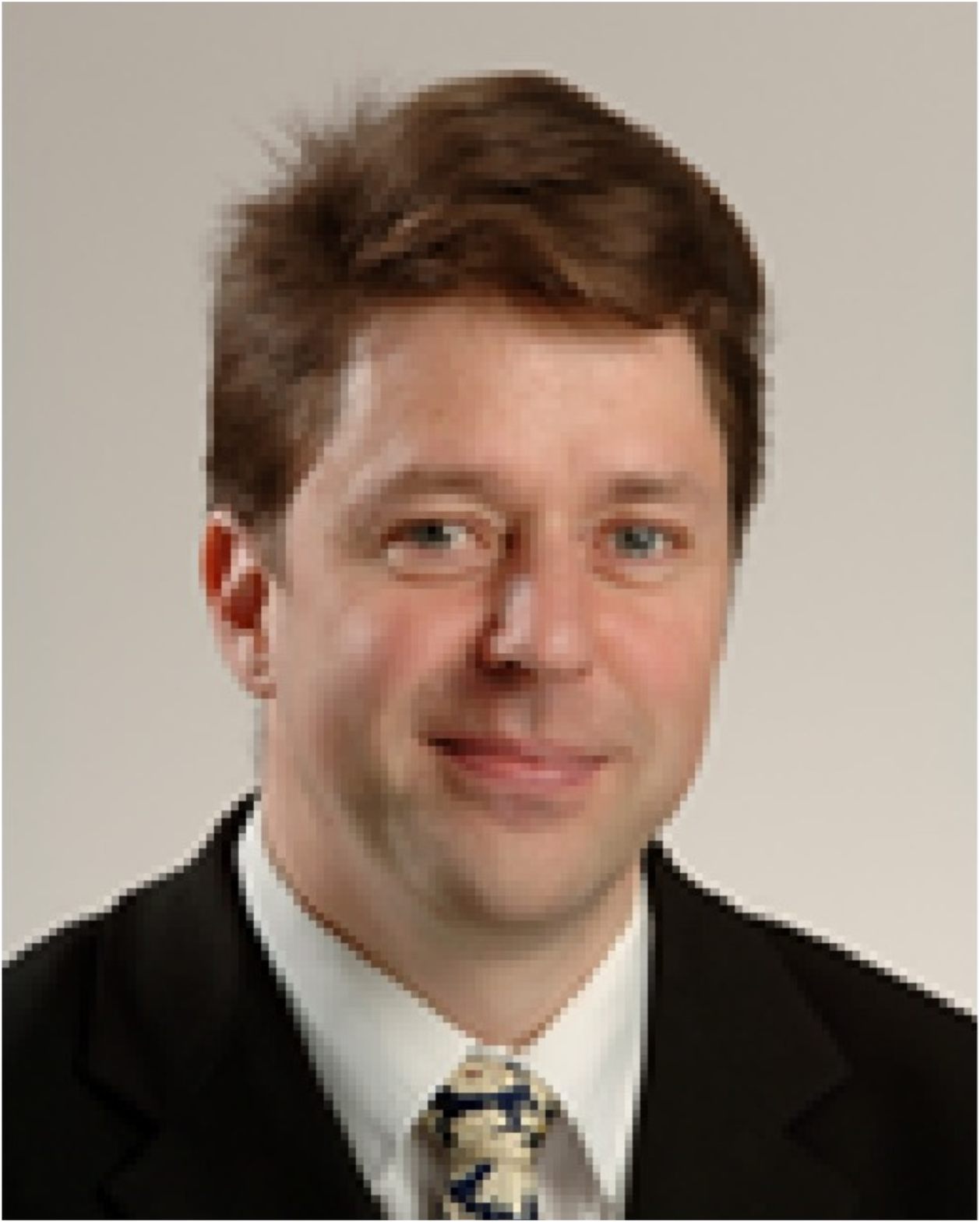}}]{Matti Latva-aho}
	received the M.Sc., Lic.Tech. and Dr. Tech (Hons.) degrees in Electrical Engineering from the University of Oulu, Finland in 1992, 1996 and 1998, respectively.
	From 1992 to 1993, he was a Research Engineer at Nokia Mobile Phones, Oulu, Finland after which he joined Centre for Wireless Communications (CWC) at the University of Oulu.
	Prof. Latva-aho was Director of CWC during the years 1998-2006 and Head of Department for Communication Engineering until August 2014.
	Currently he is Professor of Digital Transmission Techniques at the University of Oulu.
	His research interests are related to mobile broadband communication systems and currently his group focuses on 5G systems research.
	Prof. Latva-aho has published 300+ conference or journal papers in the field of wireless communications.
\end{IEEEbiography}







\end{document}